\providecommand{\customgenericname}{}
\newcommand{\newcustomtheorem}[2]{%
  \newenvironment{#1}[1]
  {%
   \renewcommand\customgenericname{#2}%
   \renewcommand\theinnercustomgeneric{##1}%
   \innercustomgeneric
  }
  {\endinnercustomgeneric}
}
\DeclareMathOperator{\wt}{wt}
\newtheorem{theorem}{Theorem} \newtheorem{lemma}{Lemma}
\newtheorem{corollary}{Corollary} \newtheorem{conjecture}{Conjecture}
\newtheorem{proposition}{Proposition}
\newcommand{\bra}[1]{\langle#1|}  
\newcommand{\ket}[1]{|#1\rangle}  
\crefname{equation}{Eq.\!}{Eqs.\!}
\crefname{figure}{Fig.\!}{Figs.\!}
\mathchardef\mhyphen="2D
\pgfplotsset{every axis legend/.append style={at={(1,0)},anchor=south east}}
\begin{document}

\title{A Hamming-Like Bound for Degenerate Stabilizer Codes}

\author{Andrew Nemec}
\email{andrew.nemec@duke.edu}
\affiliation{
	Duke Quantum Center, Duke University, Durham, NC 27701, USA
}
\affiliation{
	Department of Electrical and Computer Engineering, Duke University, Durham, NC 27708, USA
}
\author{Theerapat Tansuwannont}
\email{t.tansuwannont@duke.edu}
\affiliation{
	Duke Quantum Center, Duke University, Durham, NC 27701, USA
}
\affiliation{
	Department of Electrical and Computer Engineering, Duke University, Durham, NC 27708, USA
}

\begin{abstract}
The quantum Hamming bound was originally put forward as an upper bound on the parameters of nondegenerate quantum codes, but over the past few decades much work has been done to show that many degenerate quantum codes must also obey this bound. In this paper, we show that there is a Hamming-like bound stricter than the quantum Hamming bound that applies to degenerate $t$-error-correcting stabilizer codes of length greater than some positive integer $N(t)$. We show that this bound holds for all single-error-correcting degenerate stabilizer codes, forcing all but a handful of optimal distance-3 stabilizer codes to be nondegenerate.
\end{abstract}


\maketitle

\section{Introduction}

Quantum error correction and fault tolerance are necessary for the construction of large-scale quantum computers. Much of the theory of quantum error correction is adopted from classical error correction, including the well known class of quantum stabilizer codes \cite{Gottesman1997}, which has a direct correlation with the class of classical self-orthogonal additive codes \cite{Calderbank1998}. However, quantum codes may have attributes not found among classical codes, such as degeneracy, in which multiple correctable errors may have identical effects on the encoded state.

Degenerate codes are important to the theory of quantum error correction, as they make up a large portion of the class of quantum LDPC codes \cite{Tillich2014, Panteleev2022, Leverrier2022}, as well being important for constructions of quantum-classical hybrid codes \cite{Grassl2017, Nemec2021} and permutation-invariant quantum codes \cite{Pollatsek2004, Kubischta2023}. Regardless of their importance, degenerate codes remain understudied, mostly on account of the difficulty of analyzing them, especially when proving bounds on their parameters.

One of the simplest bounds on the parameters of classical codes is the Hamming (or sphere-packing) bound \cite{Hamming1950}. However, with the quantum Hamming bound \cite{Ekert1996, Knill1997}, the analogous argument only work for nondegenerate codes. While initially there was hope that degenerate codes might surpass the quantum Hamming bound by ``packing'' more efficiently \cite[Section 10.3.3]{Nielsen2000}, through a series of results \cite{Gottesman1997, Li2010, Dallas2022} the quantum Hamming bound has been shown to hold for many degenerate codes, leading to the conjecture that the quantum Hamming bound must hold for all quantum codes.

Recently, it has been shown that for single-error-correcting degenerate stabilizer codes, a bound slightly stronger than the quantum Hamming bound holds \cite{Nemec2023}. In this paper, we strengthen this argument and show that there exists a positive integer $N\!\left(t\right)$ such that for all $n\geq N\!\left(t\right)$, any $t$-error-correcting degenerate stabilizer code of length $n$ must satisfy
\begin{equation}
k\leq n-1-\log_{2}\!\left(\sum\limits_{i=0}^{t}3^{i}\binom{n-2t}{i}\right),
\end{equation}
which is tighter than the quantum Hamming bound.

\section{Degenerate Codes}

An $\left(\!\left(n,K,d\right)\!\right)$ quantum code is a $K$-dimensional subspace $\mathcal{C}$ of the Hilbert space $\mathbb{C}^{2^{n}}$ defined by $n$ physical qubits, such that any error affecting $t=\left\lfloor\frac{d-1}{2}\right\rfloor$ physical qubits can be corrected. The set of correctable errors $\mathcal{E}$ is defined by the Knill-Laflamme conditions \cite{Knill1997}:
\begin{equation}
\bra{c_{i}}E_{a}^{\dagger}E_{b}\ket{j}=\lambda_{a,b}\delta_{i,j},
\end{equation}
for all $E_{a},E_{b}\in\mathcal{E}$, where $\left\{\ket{c_{i}}\right\}$ is an orthonormal basis of $\mathcal{C}$. If all errors affecting $t$ or fewer qubits are correctable, we call $\mathcal{C}$ a $t$-error-correcting code.

The coefficients $\lambda_{a,b}$ define a Hermitian matrix, and we call the quantum code $\mathcal{C}$ degenerate with respect to $\mathcal{E}$ if and only if this associated matrix is degenerate. In other words, a nondegenerate code will have linearly independent correctable errors that map the code to linearly independent subspaces. A related dichotomy is the distinction between pure and impure codes \cite{Calderbank1998}, where a code is called pure with respect to $\mathcal{E}$ if linearly independent errors map the code to mutually orthogonal subspaces. In particular, code purity implies nondegeneracy, although the converse does not need to hold (see \cite{Cao2022} for an example of an impure but nondegenerate nonadditive code).

In this paper we will restrict our focus to the class of stabilizer codes, where there is no distinction between impure and degenerate codes. In this case, an $\left[\!\left[n,k,d\right]\!\right]$ stabilizer code, where $K=2^{k}$, is degenerate if and only if its stabilizer group $\mathcal{S}$ contains a non-identity element of weight less than $d$. One property of degenerate stabilizer codes is that they have low-weight correctable Pauli errors that share syndromes.

\section{Quantum Hamming Bound}

For a classical $\left(n,K,d\right)$ error-correcting code over an alphabet of cardinality $q$, each codeword must be at least Hamming distance $d$ away from every other codeword, meaning that Hamming-spheres of radius $t$ centered around codewords cannot overlap. This gives rise to the classical Hamming bound:
\begin{equation}
\log_{q}\!\left(\sum\limits_{i=0}^{t}\left(q-1\right)^{i}\binom{n}{i}\right)\leq n-\log_{q}\!\left(K\right).
\end{equation}
This is also known as the sphere-packing bound, as at most $K$ spheres of radius $t$ can fit in the Hamming space defined by codewords of length $n$.

A quantum analog of this bound was introduced independently by Ekert and Macchiavello \cite{Ekert1996} and Knill and Laflamme \cite{Knill1997} for nondegenerate $\left(\!\left(n,K,d\right)\!\right)$ codes:
\begin{equation}
\log_{2}\!\left(\sum\limits_{i=0}^{t}3^{i}\binom{n}{i}\right)\leq n-\log_{2}\!\left(K\right).
\end{equation}
From here forward, we restrict ourselves to the stabilizer case. In this case $K=2^{k}$, and we can interpret the left hand side of the inequality as the logarithm of the number of possible errors of weight at most $t$ on $n$ qubits, and the right hand side as the logarithm of the number of possible syndromes. For a nondegenerate code, each of these low-weight errors must have a unique syndrome, so it must obey this bound, while for degenerate codes at least one syndrome is shared by multiple errors, causing this argument to break down.

Since this sphere-packing argument seemingly does not apply to degenerate codes, it has been suggested that these codes might be able to violate the quantum Hamming bound by some efficient packing of small-weight errors into shared syndromes \cite[Section 10.3.3]{Nielsen2000}. This has been supported by the discovery of examples of highly-degenerate codes which have been found to violate the quantum hashing bound \cite{DiVincenzo1998, Kay2014}.

However, even early on this view was challenged by results by Gottesman that all stabilizer codes with $t=1,2$ must satisfy the quantum Hamming bound \cite{Gottesman1997} and by Rains that it holds for all codes with $n\leq 30$ (see \cite{Shor1997}). Li and Xing later gave more supporting evidence by showing there exists a positive integer $M\!\left(t\right)$ such that for all $n\geq M\!\left(t\right)$, any degenerate $\left(\!\left(n,K,2t+1\right)\!\right)$ code must satisfy the quantum Hamming bound \cite{Li2010}. More recently, Dallas et al. combined the result of Li and Xing with a bound due to Rains \cite{Rains1999b} to show that the bound holds for all codes with $t<63$ \cite{Dallas2022}.

This evidence leads to the following conjecture:

\begin{conjecture}
All quantum codes satisfy the quantum Hamming bound.
\end{conjecture}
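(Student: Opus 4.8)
The plan is to attack the conjecture through a case split on the length $n$, using the main theorem of this paper to dispatch the asymptotic regime and refined linear-programming methods to cover the finitely many remaining cases. I would first reduce to the stabilizer setting, where degeneracy and impurity coincide. If the stabilizer contains no non-identity element of weight at most $2t$, then the nondegenerate argument applies verbatim: each of the $\sum_{i=0}^{t}3^{i}\binom{n}{i}$ Pauli errors of weight at most $t$ carries a distinct syndrome, and only $2^{n-k}$ syndromes exist. All of the difficulty is therefore concentrated in the case where a non-identity stabilizer element of weight at most $2t$ forces distinct low-weight errors to share syndromes, so that the naive sphere-packing count breaks down.

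For these genuinely degenerate codes I would split at a finite threshold in $n$. For $n\geq N\!\left(t\right)$ the main theorem supplies the Hamming-like bound
\begin{equation}
k\leq n-1-\log_{2}\!\left(\sum_{i=0}^{t}3^{i}\binom{n-2t}{i}\right),
\end{equation}
and a direct comparison shows this is strictly tighter than the quantum Hamming bound exactly when $2\sum_{i=0}^{t}3^{i}\binom{n-2t}{i}\geq\sum_{i=0}^{t}3^{i}\binom{n}{i}$. Since $\binom{n-2t}{i}/\binom{n}{i}\to 1$ as $n\to\infty$, this holds for every length past a finite point, so the present theorem already subsumes the conjecture for all sufficiently long degenerate stabilizer codes, with room to spare.

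What remains is the finite family of degenerate codes of short length. Here I would extend the program of Dallas et al., which already settles every $t<63$: feed the quantum MacWilliams identities and Rains' shadow bound into a linear program for the weight distribution $\left(A_{i},B_{i}\right)$, imposing $B_{i}\geq A_{i}\geq 0$ together with the degeneracy constraint $A_{i}>0$ for some $0<i<d$, and show that no feasible distribution violates the Hamming bound. For each fixed $t$ this is a finite search, and for small $t$ it can be completed exhaustively.

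The main obstacle is that no single argument is known to cover all $n$ and all $t$ at once. As $t$ grows, the short-length window $n<N\!\left(t\right)$ widens without control while the finite-case techniques reach only $t<63$, so the crux is closing the interpolation region between the linear-programming-tractable short codes and the asymptotically controlled long codes. Compounding this, the conjecture concerns arbitrary quantum codes, whereas the sharpest tool above is stabilizer-specific; transporting the argument to nonadditive codes requires working entirely within the Shor--Laflamme enumerator framework, where the structural crutch of a low-weight stabilizer element is simply unavailable. It is precisely the lack of a uniform argument bridging these regimes that keeps the statement a conjecture rather than a theorem.
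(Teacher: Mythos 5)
You have not produced a proof, and there is none in the paper to compare it against: the statement is presented there as a \emph{conjecture}, supported only by partial results (Gottesman's proof for $t=1,2$, Rains' verification for $n\leq 30$, the asymptotic theorem of Li and Xing for $n\geq M\!\left(t\right)$ \cite{Li2010}, and the extension by Dallas et al.\ to all $t<63$ \cite{Dallas2022}). Your proposal assembles essentially this same body of evidence and then, in its final paragraph, concedes that the pieces do not join up --- that concession is accurate, but it means the attempt fails as a proof. Beyond this global point, there are two concrete defects. First, your opening move, ``reduce to the stabilizer setting, where degeneracy and impurity coincide,'' is not a reduction at all: the conjecture quantifies over \emph{all} quantum codes, and for a nonadditive code there is no stabilizer group, no count of $2^{n-k}$ syndromes, and no low-weight stabilizer element witnessing degeneracy; degeneracy there is defined through the rank of the Knill--Laflamme matrix $\lambda_{a,b}$. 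Consequently the main theorem of this paper and \cref{mainresult}, whose proofs hinge on the subgroup generated by low-weight stabilizer elements, simply do not apply; for the asymptotic regime of general degenerate codes you would have to invoke Li--Xing \cite{Li2010} instead, and the paper's own conclusion flags the transport of its stabilizer-specific bound to nonadditive codes as an open question.

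Second, even within the stabilizer class your case split does not terminate. The linear-programming/shadow-enumerator route of Dallas et al.\ closes the short-length window only because, for $t<63$, Rains' bound $n\geq 3d-4$ \cite{Rains1999b} already forces $n\geq M\!\left(t\right)$; for larger $t$ the uncovered interval $3d-4\leq n< M\!\left(t\right)$ is nonempty and grows with $t$ (and the window below $N\!\left(t\right)$ from \cref{mainresult} is far larger still, as \cref{tab:Nvaluetable} shows). Saying ``for each fixed $t$ this is a finite search'' does not prove anything about the infinitely many remaining values of $t$ unless you supply a uniform argument, which is precisely the missing ingredient you yourself identify. A minor additional slip: your asymptotic step needs $n$ large enough both for $n\geq N\!\left(t\right)$ and for the comparison $\sum_{i=0}^{t}3^{i}\binom{n}{i}\leq 2\sum_{i=0}^{t}3^{i}\binom{n-2t}{i}$, so the threshold should be the maximum of the two rather than $N\!\left(t\right)$ alone. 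In short, what you have written is a correct map of the known partial results and of why the statement remains open --- which is exactly how the paper presents it --- but it is not, and does not claim to be on close reading, a proof.
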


An asymptotic version of the quantum Hamming bound was proven by Ashikhmin et al. \cite{Ashikhmin1999, Ashikhmin2000}. Generalizations of the bound to the nonbinary case were done in \cite{Feng2004, Ketkar2006, Aly2007, Sarvepalli2010}. A bound combining the quantum Hamming and Singleton bounds was also given by Yu et al. \cite{Yu2010}.

There have been many variants of the quantum Hamming bound that have been proven for different variants of quantum codes, including subsystem codes \cite{Aly2006, Klappenecker2007a, Klappenecker2007b}, entanglement-assisted codes \cite{Lai2013, Li2014, Lai2018}, and quantum data-syndrome codes \cite{Fujiwara2014, Ashikhmin2020, Nemec2023}. In the former two cases, examples of codes violating those variants of the quantum Hamming bound have been found.

\section{New Bounds for Degenerate Stabilizer Codes}

Taking inspiration from the quantum data-syndrome variant of the quantum Hamming bound, recently it has been proven in \cite{Nemec2023} that any degenerate $\left[\!\left[n,k,3\right]\!\right]$ stabilizer code must satisfy the bound
\begin{equation}
\log_{2}\!\left(4n-k+1\right)\leq n-k,
\end{equation}
which is stronger than the quantum Hamming bound. We believe that this is the first bound that applies to degenerate codes but not to nondegenerate codes, with the exception of the quantum Singleton bound, which has a non-strict inequality in the case of nondegenerate codes and a strict inequality in the case of degenerate codes \cite{Calderbank1998, Rains1999a}.

Can this bound be improved, and if so, by how much can it be improved? Clearly it cannot be dramatically far from any bound for nondegenerate codes, because the existence of an $\left[\!\left[n,k,d\right]\!\right]$ stabilizer code with $k>0$ always implies the existence of an $\left[\!\left[n+1,k,d\right]\!\right]$ degenerate stabilizer code \cite[Theorem 6]{Calderbank1998}. In \cite{Nemec2023}, it was also conjectured that a stronger bound
\begin{equation}
\log_{2}\!\left(3\!\left(n-2\right)+1\right)+1\leq n-k,
\end{equation}
also holds for single-error-correcting stabilizer codes. This conjectured bound is quite tight with many of the best known degenerate stabilizer codes constructed using \cite[Theorem 6]{Calderbank1998}. In this section we prove that this conjectured bound holds for all $\left[\!\left[n,k,3\right]\!\right]$ stabilizer codes, and prove that there exists a positive integer $N\!\left(t\right)$ such that for all $n\geq N\!\left(t\right)$, any $\left[\!\left[n,k,2t+1\right]\!\right]$ degenerate stabilizer code must satisfy a similar bound.

We go about proving these new bounds using the same technique used by Gottesman in \cite{Gottesman1997} to prove the quantum Hamming bound for degenerate stabilizer codes with $t=1,2$.

\begin{lemma}
Any degenerate $\left[\!\left[n,k,2t+1\right]\!\right]$ stabilizer code with $\ell$ independent stabilizer generators $S_{1}, \dots, S_{\ell}$ of weight $2t$ or less satisfies
\begin{equation}
k\leq \begin{cases}
n-\ell-\log_{2}\!\left(\sum\limits_{i=0}^{t}3^{i}\binom{n-\sigma}{i}\right) & \text{if } \sigma\leq n \\
n-\ell & \text{if } \sigma> n,
\end{cases}
\label{gensigmabound}
\end{equation}
where $\sigma=\sum_{i=1}^{\ell}\wt\!\left(S_{i}\right)$.
\end{lemma}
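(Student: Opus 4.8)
The plan is to adapt Gottesman's syndrome-counting argument. Since there are only $n-k$ stabilizer generators, every Pauli syndrome lives in $\FF_{2}^{n-k}$, so there are at most $2^{n-k}$ distinct syndromes; it therefore suffices to exhibit at least $2^{\ell}\sum_{i=0}^{t}3^{i}\binom{n-\sigma}{i}$ Pauli operators with pairwise distinct syndromes and then take base-$2$ logarithms. I would first dispose of the case $\sigma>n$, where the asserted bound $k\leq n-\ell$ is merely the statement that the $\ell$ independent generators $S_{1},\dots,S_{\ell}$ form part of a generating set of size $n-k$, so that $\ell\leq n-k$.

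For the substantive case $\sigma\leq n$, let $T=\bigcup_{i=1}^{\ell}\mathrm{supp}(S_{i})$ and $U=\{1,\dots,n\}\setminus T$. Since $|T|\leq\sum_{i}\wt(S_{i})=\sigma$, we have $|U|\geq n-\sigma$. I would then look at the set $\mathcal{E}_{U}$ of Pauli errors of weight at most $t$ supported entirely on $U$, of which there are at least $\sum_{i=0}^{t}3^{i}\binom{n-\sigma}{i}$.

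The crux is to show that the errors in $\mathcal{E}_{U}$ all have distinct syndromes. If two of them, $E$ and $E'$, shared a syndrome, then $E^{\dagger}E'$ would commute with every generator, hence lie in $\mathcal{N}(\mathcal{S})$; it would be supported on $U$ and have weight at most $2t<2t+1=d$, so it would be a stabilizer element of weight below the distance supported entirely on $U$, nontrivial unless $E=E'$. The main obstacle is ruling this out, which is precisely where degeneracy must be controlled: I would take $S_{1},\dots,S_{\ell}$ to be a maximal independent set of stabilizer elements of weight $\leq 2t$, so that every weight-$\leq 2t$ stabilizer element lies in $\langle S_{1},\dots,S_{\ell}\rangle$ and is thus supported on $T$. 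No nontrivial such element can then be supported on $U$, forcing $E=E'$ and giving the desired injectivity of the syndrome map on $\mathcal{E}_{U}$.

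Finally, to recover the factor $2^{\ell}$, I would note that each $E\in\mathcal{E}_{U}$ commutes with every $S_{i}$ (disjoint supports), so its syndrome vanishes on the first $\ell$ coordinates. Because $S_{1},\dots,S_{\ell}$ are independent, the symplectic form guarantees, for each $v\in\FF_{2}^{\ell}$, a Pauli $F_{v}$ supported on $T$ whose commutation pattern with $(S_{1},\dots,S_{\ell})$ equals $v$; such $F_{v}$ need not have weight $\leq t$, but that is irrelevant for counting syndromes. Since the syndrome map is a homomorphism and $F_{v}$ and $E$ have disjoint support, the $2^{\ell}\,|\mathcal{E}_{U}|$ operators $F_{v}E$ have pairwise distinct syndromes: distinct $v$ are separated by the first $\ell$ coordinates, and for fixed $v$ the injectivity on $\mathcal{E}_{U}$ separates distinct $E$. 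This produces $2^{\ell}\sum_{i=0}^{t}3^{i}\binom{n-\sigma}{i}\leq 2^{n-k}$, and taking logarithms yields the stated inequality.
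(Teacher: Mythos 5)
Your proof is correct, but it runs along a genuinely different (dual) counting from the paper's. The paper works in Hilbert space: it sets $\mathcal{S}_{D}=\left\langle S_{1},\dots,S_{\ell}\right\rangle$, notes that the subspace $\mathcal{C}_{D}$ it stabilizes has dimension $2^{n-\ell}$, observes that weight-$\leq t$ errors on qubits untouched by $\mathcal{S}_{D}$ lie in $N\!\left(\mathcal{S}_{D}\right)$ and hence map $\mathcal{C}$ back into $\mathcal{C}_{D}$, and that mutual nondegeneracy of these errors forces their images of $\mathcal{C}$ to be mutually orthogonal; packing these $2^{k}$-dimensional images into $\mathcal{C}_{D}$ gives $2^{k}\sum_{i=0}^{t}3^{i}\binom{n-\sigma}{i}\leq 2^{n-\ell}$, while you instead count syndromes in $\FF_{2}^{n-k}$ and obtain the rearranged inequality $2^{\ell}\sum_{i=0}^{t}3^{i}\binom{n-\sigma}{i}\leq 2^{n-k}$. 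Your extra ingredient, the operators $F_{v}$ realizing every pattern $v\in\FF_{2}^{\ell}$, is legitimate (for $I\neq\emptyset$ the product $\prod_{i\in I}S_{i}$ is nontrivial and supported on $T$, so no nonzero functional $\langle\cdot,\prod_{i\in I}S_{i}\rangle$ vanishes identically on the Pauli group of $T$, giving surjectivity of the commutation-pattern map), but it is actually dispensable: since every $E\in\mathcal{E}_{U}$ commutes with $S_{1},\dots,S_{\ell}$, the pairwise-distinct syndromes of $\mathcal{E}_{U}$ all vanish on the first $\ell$ coordinates, so there are at most $2^{n-k-\ell}$ of them, which is already the desired bound. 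Both arguments hinge on the same crucial point, which you handle more explicitly than the paper: one must know that every stabilizer element of weight at most $2t$ lies in $\left\langle S_{1},\dots,S_{\ell}\right\rangle$, hence is supported on $T$ and cannot equal $E^{\dagger}E'$ for distinct $E,E'$ supported on $U$. The paper simply asserts that $\mathcal{S}\setminus\mathcal{S}_{D}$ contains no low-weight elements, implicitly reading the hypothesis as saying the $S_{i}$ exhaust the low-weight part of $\mathcal{S}$; your choice of a maximal independent set of weight-$\leq 2t$ elements is what makes that reading rigorous, and it matches how the lemma is used downstream. Your treatment of the $\sigma>n$ case coincides with the paper's. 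As for what each approach buys: yours is purely symplectic and combinatorial, avoiding any Hilbert-space orthogonality reasoning, whereas the paper's subspace-packing formulation makes the geometric picture---$\mathcal{C}$ and its error images packed inside the intermediate code $\mathcal{C}_{D}$---and hence the ``shifted quantum Hamming bound'' reading of \cref{gensigmabound} immediate.
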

\begin{proof}
Let $\mathcal{S}$ be the stabilizer group of the code $\mathcal{C}$, and let $\mathcal{S}_{D}=\left\langle S_{1},\dots,S_{\ell}\right\rangle$ be the subgroup generated by the stabilizer generators of weight less than $2t$. Then we have that $\mathcal{S}\setminus \mathcal{S}_{D}$ has no elements of weight less than $2t$. The subspace $\mathcal{C}_{D}$ stabilized by $\mathcal{S}_{D}$ has dimension $2^{n-\ell}$, and an operator in the normalizer of $\mathcal{S}_{D}$ will map states in $\mathcal{C}_{D}$ back to $\mathcal{C}_{D}$. Suppose that the $j$-th qubit is not acted upon by the operators in $\mathcal{S}_{D}$, then the single-qubit Pauli operators $X_{j}$, $Y_{j}$, and $Z_{j}$ must all be in $N\!\left(\mathcal{S}_{D}\right)$ and must not be degenerate errors. Therefore, each must map a basis codeword of $\mathcal{C}$ to an orthogonal codeword in $\mathcal{C}_{D}$. Since $\sigma$ is an upper bound on the number of physical qubits affected by the operators in $\mathcal{S}_{D}$, there are always at least $n-\sigma$ unaffected qubits. Therefore, the following holds when $\sigma\leq n$:
\begin{equation}
2^{k}\sum\limits_{i=0}^{t}3^{i}\binom{n-\sigma}{i}\leq 2^{n-\ell}.
\end{equation}

When $\sigma>n$, we can still bound $k$ by noting that $\mathcal{S}$ is generated by $n-k$ elements, therefore $\ell\leq n-k$, which can be rearranged as
\begin{equation}
k\leq n-\ell.
\end{equation}
\end{proof}

\begin{figure}[t]
\scalebox{0.88}{
\begin{tikzpicture}
\begin{axis}[
samples = 50,
xlabel = {$n$},
xmajorgrids=true,
xmax = 26.1,
xmin = 0.0,
y label style={at={(axis description cs:0.1,.5)},anchor=south},
ylabel = {$k$},
ymajorgrids=true,
ymax = 18.1,
ymin = 0.0,
domain = 0.0:26.1
]
\addplot[solid, black, ultra thick] {x - (ln(1 + 3 * x) / ln(2))};
\addplot[solid, red, ultra thick][domain=6.0:26.1] {x - 3 - (ln(1 + 3 * (x - 6)) / ln(2))};
\addplot[solid, red, ultra thick][domain=0.0:6.0] {x - 3};
\addplot[solid, blue, ultra thick][domain=14.0:26.1] {x - 8 - (ln(1 + 3 * (x - 14)) / ln(2))};
\addplot[solid, blue, ultra thick][domain=0.0:14.0] {x - 8};

\end{axis}
\end{tikzpicture}
}
\caption{The quantum Hamming bound ($\ell=0,\sigma=0$) for $t=1$ (black) and two of its shifts: $\ell=3,\sigma=6$ (red) and $\ell=8,\sigma=14$ (blue).}
\label{shifthammingfig}
\end{figure}
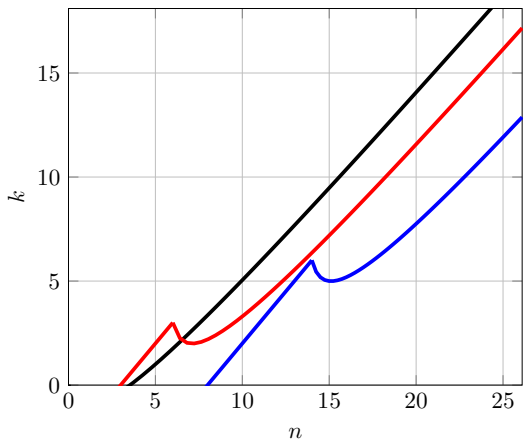

It is easy to see that the portion of the bound in \cref{gensigmabound} where $\sigma\leq n$ is simply a copy of the quantum Hamming bound that has been horizontally shifted to the right by $\sigma$ and vertically shifted up by $\sigma-\ell$:
\begin{equation}
k\leq \begin{cases}
h_{t}\!\left(n-\sigma\right)+\sigma-\ell & \text{if } \sigma\leq n \\
n-\ell & \text{if } \sigma> n,
\end{cases}
\label{shiftsigmabound}
\end{equation}
where $h_{t}\!\left(n\right)$ is the usual quantum Hamming bound for $t$-error-correcting codes. \cref{shifthammingfig} shows the quantum Hamming bound for $t=1$ and two of its shifts. We will need multiple results about the function $h_{t}\!\left(n\right)$ and its shifts, which we prove in the Appendix.


If $\ell$ is held constant, increasing $\sigma$ by one results in a translation of \cref{shiftsigmabound} up by one and to the right by one. By \cref{shift1-1} the Appendix, we know that $\sigma=2t\ell$ also results in a bound for $k$ for all values of $\sigma$ in the range $\ell\leq\sigma\leq2t\ell$. We refer to 
\begin{equation}
k\leq \begin{cases}
n-\ell-\log_{2}\!\left(\sum\limits_{i=0}^{t}3^{i}\binom{n-2t\ell}{i}\right) & \text{if } 2t\ell\leq n \\
n-\ell & \text{if } 2t\ell> n,
\end{cases}
\end{equation}
as the $\left(\ell,t\right)$-bound, as it is an upper bound on any degenerate $\left[\!\left[n,k,2t+1\right]\!\right]$ stabilizer code with $\ell$ independent generators. Note that the $\left(0,t\right)$-bound is the original quantum Hamming bound. We show that for any fixed values of $\ell$ and $t$, there exists some positive integer $N\!\left(\ell,t\right)$ after which the $\left(\ell,t\right)$-bound will dominate the $\left(\ell',t\right)$-bound for all $\ell'>\ell$.

\begin{theorem}
There exists an $N\!\left(\ell,t\right)\in\mathbb{N}$ such that for all $n\geq N\!\left(\ell,t\right)$, any $\left[\!\left[n,k,2t+1\right]\!\right]$ stabilizer code with at least $\ell$ independent generators of weight $2t$ or less satisfies the following inequality:
\begin{equation}
k\leq n-\ell-\log_{2}\!\left(\sum\limits_{i=0}^{t}3^{i}\binom{n-2t\ell}{i}\right).
\end{equation}
\end{theorem}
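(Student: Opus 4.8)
The plan is to combine the preceding Lemma with a purely analytic comparison of the $\left(\ell,t\right)$-bounds attached to different numbers of low-weight generators. A code satisfying the hypothesis has some \emph{actual} number $\ell'\geq\ell$ of independent generators of weight at most $2t$; applying the Lemma to these $\ell'$ generators, together with the Appendix fact already invoked above (that for a fixed number of generators the choice $\sigma=2t\ell'$ gives the largest, hence weakest, valid bound), shows that $k$ is bounded by the $\left(\ell',t\right)$-bound. It therefore suffices to prove the analytic statement announced just before the theorem: for all sufficiently large $n$ the $\left(\ell,t\right)$-bound dominates every $\left(\ell',t\right)$-bound with $\ell'>\ell$, i.e.\ its right-hand side is the largest. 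Writing $A_m=\sum_{i=0}^{t}3^{i}\binom{n-2tm}{i}$ and $f_m=n-m-\log_2 A_m$ for fixed $n$, the goal reduces to showing $f_\ell\geq f_{\ell'}$ when $2t\ell'\leq n$, and $f_\ell\geq n-\ell'$ when $2t\ell'>n$, for every $\ell'>\ell$ once $n\geq N\!\left(\ell,t\right)$.

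For the main case $2t\ell'\leq n$ I would study the gap $f_\ell-f_{\ell'}=d+\log_2\!\left(A_{\ell'}/A_\ell\right)$, where $d=\ell'-\ell\geq1$, and split on the size of $d$. When $d$ is large I would use the crude bounds $A_{\ell'}\geq1$ and $A_\ell\leq\left(t+1\right)3^{t}\binom{n}{t}$, which give $\log_2\!\left(A_{\ell'}/A_\ell\right)\geq-t\log_2 n-C\!\left(t\right)$ for a constant $C\!\left(t\right)$, so that the integer term $d$ overwhelms the logarithmic loss as soon as $d\geq t\log_2 n+C\!\left(t\right)$. When $d$ is small, of size $O\!\left(\log n\right)$, the remaining variable $m=n-2t\ell'$ is forced to be large, and a term-by-term binomial estimate of the form $\binom{m+2td}{i}/\binom{m}{i}\leq\bigl(1+\tfrac{2td}{m-t}\bigr)^{t}$ yields $\log_2\!\left(A_{\ell'}/A_\ell\right)\geq-\tfrac{2t^2 d}{\left(m-t\right)\ln 2}$, whence $f_\ell-f_{\ell'}\geq d\bigl(1-\tfrac{2t^2}{\left(m-t\right)\ln 2}\bigr)\geq0$ for $n$ large.

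The remaining case $2t\ell'>n$ is easier: there the $\left(\ell',t\right)$-bound reads $k\leq n-\ell'$, and since $\ell'>n/\left(2t\right)$ grows linearly while $\log_2 A_\ell=O\!\left(\log n\right)$, for large $n$ one has $\ell'>\ell+\log_2 A_\ell$, that is $n-\ell'<f_\ell$. Taking $N\!\left(\ell,t\right)$ to be the maximum of the finitely many thresholds produced above, and large enough that $2t\ell\leq n$ so that the displayed first-case form of $f_\ell$ is the operative one, then gives a single threshold valid simultaneously for all $\ell'>\ell$, completing the argument.

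I expect the principal difficulty to be making the comparison uniform in $\ell'$ all the way up to the boundary $\ell'\approx n/\left(2t\right)$. The map $\ell'\mapsto f_\ell-f_{\ell'}$ is not monotone there: when $m=n-2t\ell'$ is tiny the discrete difference $\left(f_\ell-f_{\ell'}\right)-\left(f_\ell-f_{\ell'-1}\right)$ can turn negative, so the naive argument ``it increases from $f_\ell-f_\ell=0$'' fails. The fix is precisely the two-regime split above—handling small $d$ with the sharp logarithmic estimate and large $d$ with the crude one—so that neither half relies on monotonicity; the care lies in choosing thresholds that patch cleanly across the $d$-regime boundary and across the $2t\ell'\leq n$ versus $2t\ell'>n$ divide.
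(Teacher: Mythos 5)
Your proposal is correct, and it reaches the theorem by a genuinely different route than the paper. Both arguments share the same skeleton: apply the preceding Lemma to the code's \emph{actual} number $\ell'\geq\ell$ of independent low-weight generators, weaken $\sigma$ to $2t\ell'$ via \cref{shift1-1}, and thereby reduce the theorem to the analytic claim that for all large $n$ the $\left(\ell,t\right)$-bound dominates every $\left(\ell',t\right)$-bound with $\ell'>\ell$. Where you diverge is in proving that claim. The paper first uses translation invariance to reduce to $\ell=0$, then splits on $a=\ell'-\ell$ at a cutoff $a_{0}$ \emph{independent of $n$}: for $a\geq a_{0}$ it compares the two bounds at the single point $n=2ta$ (the peak of the shifted bound) and invokes \cref{shiftprop} --- whose proof rests on the Appendix calculus (continuity, $h_{t}'<1$, convexity $h_{t}''>0$, the local-extrema structure) --- to propagate that one-point comparison to all $n$; the finitely many $a<a_{0}$ are then handled by the limit argument that the gap tends to $a>0$. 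Your split is instead on $d=\ell'-\ell$ relative to a cutoff $t\log_{2}n+C\!\left(t\right)$ \emph{growing with $n$}: large $d$ is killed by the crude bound $\log_{2}A_{\ell}\leq t\log_{2}n+C\!\left(t\right)$, small $d$ by the binomial-ratio estimate giving a gap of at least $d\left(1-\tfrac{2t^{2}}{\left(m-t\right)\ln2}\right)$, and the $2t\ell'>n$ branch by linear-versus-logarithmic growth; this bypasses \cref{shiftprop} and all of the derivative analysis behind it. Your diagnosis of the pitfall (non-monotonicity of $\ell'\mapsto f_{\ell}-f_{\ell'}$ near $2t\ell'\approx n$) is exactly the difficulty \cref{shiftprop} exists to handle, and your two-regime split is a valid alternative cure, since near that boundary $d$ is of order $n/\left(2t\right)$ and so falls squarely in your large-$d$ regime. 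What each approach buys: yours is elementary and fully quantitative --- every threshold is explicit and uniform in $\ell'$, so $N\!\left(\ell,t\right)$ is in principle computable directly from the estimates --- whereas the paper's route, though heavier, exposes the geometry of the shifted bounds (dominated at the peak implies dominated everywhere), which underlies its conjecture $N\!\left(0,t\right)=2ta_{0}$ and its relatively small tabulated values of $N\!\left(t\right)$; your thresholds come out at the scale where $n/\left(2t\right)$ must beat $t\log_{2}n+C\!\left(t\right)$, i.e.\ somewhat larger than the paper's.
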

\begin{proof}
We want to show that there exists some $N\!\left(\ell,t\right)\in\mathbb{N}$ after which the $\left(\ell,t\right)$-bound will dominate all $\left(\ell',t\right)$-bounds for $\ell'>\ell$. Note that the $\left(\ell,t\right)$-bound is the $\left(0,t\right)$-bound shifted to the right by $2t\ell$ and up by $\left(2t-1\right)\ell$, and similarly, the $\left(\ell',t\right)$-bounds are shifts of the $\left(a,t\right)$-bounds by the same amount, where $a=\ell'-\ell$. Therefore, $N\!\left(\ell,t\right)$ exists and equals $N\!\left(0,t\right)+2t\ell$ if and only if $N\!\left(0,t\right)$ exists.

The $\left(a,t\right)$-bound has a local maxima at $n=2ta$, which will be less than the $\left(0,t\right)$-bound at that point for all $a$ such that 
\begin{equation}
\log_2\left(\sum\limits_{i=0}^{t}3^{i}\binom{2ta}{i}\right)< a.
\end{equation}
Since the left hand side of the inequality is in $o\!\left(a\right)$, it is true that this must hold true for all $a$ after some $a_{0}$. By \cref{shiftprop} the Appendix, all $\left(a,t\right)$-bounds for $a\geq a_{0}$ must be bounded above by the $\left(0,t\right)$-bound for all values of $n$.

For $a<a_{0}$, the difference between the $\left(0,t\right)$-bound and the $\left(a,t\right)$-bound is
\begin{equation}
n-\log_{2}\!\left(\sum\limits_{i=0}^{t}3^{i}\binom{n}{i}\right)-n+a+\log_{2}\!\left(\sum\limits_{i=0}^{t}3^{i}\binom{n-2ta}{i}\right),
\end{equation}
or equivalently
\begin{equation}
a+\log_{2}\!\left(\frac{\sum\limits_{i=0}^{t}3^{i}\binom{n-2ta}{i}}{\sum\limits_{i=0}^{t}3^{i}\binom{n}{i}}\right),
\end{equation}
which for large $n$ approaches $a$. Since $a>0$, it follows that there exists some value $n_{a}$ such that for $n\geq n_{a}$, the $\left(a,t\right)$-bound is bounded above by the $\left(0,t\right)$-bound.

To conclude the proof, we let
\begin{equation}
N\!\left(0,t\right)=\max\!\left\{n_{1},\dots,n_{a_{0}-1},2ta_{0}\right\}.
\end{equation}
\end{proof}

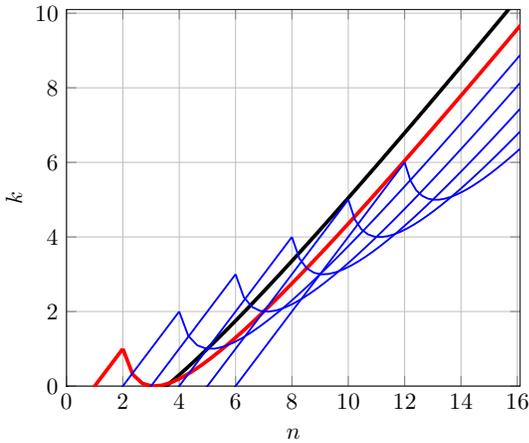
\begin{figure}[t]
\scalebox{0.88}{
\begin{tikzpicture}
\begin{axis}[
samples = 70,
xlabel = {$n$},
xmajorgrids=true,
xmax = 16.1,
xmin = 0.0,
y label style={at={(axis description cs:0.1,.5)},anchor=south},
ylabel = {$k$},
ymajorgrids=true,
ymax = 10.1,
ymin = 0.0,
domain = 0.0:26.1
]
\addplot[solid, black, ultra thick] {x - (ln(1 + 3 * x) / ln(2))};
\addplot[solid, red, ultra thick][domain=2.0:26.1] {x - 1 - (ln(1 + 3 * (x - 2)) / ln(2))};
\addplot[solid, red, ultra thick][domain=0.0:2.0] {x - 1};
\addplot[solid, blue, thick][domain=4.0:26.1] {x - 2 - (ln(1 + 3 * (x - 4)) / ln(2))};
\addplot[solid, blue, thick][domain=0.0:4.0] {x - 2};
\addplot[solid, blue, thick][domain=6.0:26.1] {x - 3 - (ln(1 + 3 * (x - 6)) / ln(2))};
\addplot[solid, blue, thick][domain=0.0:6.0] {x - 3};
\addplot[solid, blue, thick][domain=8.0:26.1] {x - 4 - (ln(1 + 3 * (x - 8)) / ln(2))};
\addplot[solid, blue, thick][domain=0.0:8.0] {x - 4};
\addplot[solid, blue, thick][domain=10.0:26.1] {x - 5 - (ln(1 + 3 * (x - 10)) / ln(2))};
\addplot[solid, blue, thick][domain=0.0:10.0] {x - 5};
\addplot[solid, blue, thick][domain=12.0:26.1] {x - 6 - (ln(1 + 3 * (x - 12)) / ln(2))};
\addplot[solid, blue, thick][domain=0.0:12.0] {x - 6};

\end{axis}
\end{tikzpicture}
}
\caption{The $\left(0,1\right)$-bound (the quantum Hamming bound for $t=1$) in black and the $\left(1,1\right)$-bound in red, along with $\left(\ell,1\right)$-bounds for $\ell=2,\dots,6$ in blue. The $\left(5,1\right)$-bound is the first bound with its local maxima below the quantum Hamming bound at $n=10$, so $N\!\left(0,t\right)=10$, since all $\left(\ell,1\right)$-bounds with $\ell<5$ have their intersection points at smaller vales of $n$. Other $N\!\left(\ell,t\right)$ values are shifts right by $2t\ell$.}
\label{shiftedfig}
\end{figure}

\cref{shiftedfig} shows how several $\left(\ell,1\right)$-bounds compare to each other. In this case where $t=1$, we can see that $n_{1}<\dots<n_{4}<2ta_{0}$, where $a_{0}=5$. We conjecture that this is always the case, that is 
\begin{equation}
N\!\left(0,t\right)=2ta_{0}.
\end{equation}

\begin{corollary}\label{mainresult}
There exists an $N\!\left(t\right)\in\mathbb{N}$ such that for all $n\geq N\!\left(t\right)$, any degenerate $\left[\!\left[n,k,2t+1\right]\!\right]$ stabilizer code satisfies the following inequality:
\begin{equation}\label{degenbound}
k\leq n-1-\log_{2}\!\left(\sum\limits_{i=0}^{t}3^{i}\binom{n-2t}{i}\right).
\end{equation}
\end{corollary}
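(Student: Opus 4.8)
The plan is to obtain this as an immediate consequence of the Theorem by recognizing that the defining property of a degenerate stabilizer code is exactly what is needed to invoke the Theorem with $\ell = 1$. Recall from Section~II that a $\codepar{n,k,2t+1}$ stabilizer code is degenerate if and only if its stabilizer group $\mathcal{S}$ contains a non-identity element of weight less than $d = 2t+1$, that is, of weight at most $2t$.

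First I would fix such a low-weight element $S \in \mathcal{S}$. Since $\mathcal{S}$, taken modulo phases, is an $\mathbb{F}_2$-vector space of dimension $n-k$, the nonzero element $S$ is by itself linearly independent and can therefore be extended to a generating set $S = S_1, S_2, \dots, S_{n-k}$ of $\mathcal{S}$. In particular, the code then possesses at least $\ell = 1$ independent generator of weight $2t$ or less, so it satisfies the hypothesis of the Theorem with $\ell = 1$.

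Applying the Theorem with $\ell = 1$ gives, for all $n \geq N(1,t)$,
\begin{equation*}
k \leq n - 1 - \log_2\!\left(\sum_{i=0}^{t} 3^i \binom{n-2t}{i}\right),
\end{equation*}
which is precisely \cref{degenbound}. Setting $N(t) = N(1,t)$ then yields the claim.

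Since the Theorem carries all of the analytic content, there is no genuine obstacle here; the only point meriting care is the translation of the qualitative notion ``degenerate'' into the combinatorial hypothesis $\ell \geq 1$. One must check that a single low-weight stabilizer element can always be promoted to a member of a minimal generating set, which is immediate from the vector-space structure of $\mathcal{S}$, and then observe that the Theorem's bound for ``at least $\ell$'' such generators applies with $\ell = 1$ irrespective of the weights of the remaining $n-k-1$ generators.
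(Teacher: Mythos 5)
Your proposal is correct and matches the paper's (implicit) reasoning exactly: the paper states this corollary without a separate proof because it is precisely the Theorem specialized to $\ell = 1$, with degeneracy supplying the required low-weight stabilizer element and $N\!\left(t\right)=N\!\left(1,t\right)$. Your additional remark that the single low-weight element must be promoted to a member of an independent generating set is the right point to make explicit, and it is handled correctly by the vector-space argument.
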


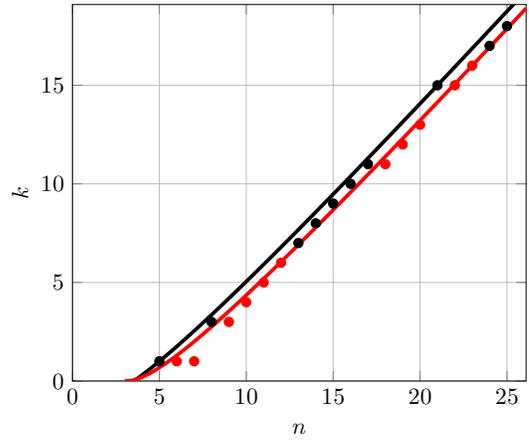
\begin{figure}[t]
\scalebox{0.88}{
\begin{tikzpicture}
\begin{axis}[
samples = 50,
xlabel = {$n$},
xmajorgrids=true,
xmax = 26.1,
xmin = 0.0,
y label style={at={(axis description cs:0.1,.5)},anchor=south},
ylabel = {$k$},
ymajorgrids=true,
ymax = 19.1,
ymin = 0.0,
domain = 0.0:26.1
]
\addplot[solid, black, ultra thick] {x - (ln(1 + 3 * x) / ln(2))};
\addplot[solid, red, ultra thick][domain=3.0:26.1] {x - 1 - (ln(1 + 3 * (x - 2)) / ln(2))};
\addplot[mark=*] coordinates {(5,1)};
\addplot[mark=*,red] coordinates {(6,1)};
\addplot[mark=*,red] coordinates {(7,1)};
\addplot[mark=*] coordinates {(8,3)};
\addplot[mark=*,red] coordinates {(9,3)};
\addplot[mark=*,red] coordinates {(10,4)};
\addplot[mark=*,red] coordinates {(11,5)};
\addplot[mark=*,red] coordinates {(12,6)};
\addplot[mark=*] coordinates {(13,7)};
\addplot[mark=*] coordinates {(14,8)};
\addplot[mark=*] coordinates {(15,9)};
\addplot[mark=*] coordinates {(16,10)};
\addplot[mark=*] coordinates {(17,11)};
\addplot[mark=*,red] coordinates {(18,11)};
\addplot[mark=*,red] coordinates {(19,12)};
\addplot[mark=*,red] coordinates {(20,13)};
\addplot[mark=*] coordinates {(21,15)};
\addplot[mark=*,red] coordinates {(22,15)};
\addplot[mark=*,red] coordinates {(23,16)};
\addplot[mark=*] coordinates {(24,17)};
\addplot[mark=*] coordinates {(25,18)};
\end{axis}
\end{tikzpicture}
}
\caption{The quantum Hamming bound for $t=1$ (black) and the bound for degenerate stabilizer codes from \cref{cordist3} (red). Optimal code parameters that must belong to nondegenerate stabilizer codes are shown as black points, while those that are allowed to belong to degenerate stabilizer codes, described by \cref{cordegenallow}, are shown as red points.}
\label{pointdegenboundfig}
\end{figure}

\begin{corollary}\label{cordist3}
Any degenerate $\left[\!\left[n,k,3\right]\!\right]$ stabilizer code satisfies:
\begin{equation}
k\leq n-1-\log_{2}\!\left(1+3\left(n-2\right)\right).
\end{equation}
\end{corollary}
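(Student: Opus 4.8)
The plan is to read this off as the $t=1$, $\ell=1$ instance of the machinery just developed, the only genuinely new content being that the bound is claimed for \emph{every} length rather than only for $n\geq N(1)$. First I would reduce to the shifted bounds: a degenerate distance-$3$ code contains, by definition, a nonidentity stabilizer element of weight at most $2$, so it has some number $\ell_{0}\geq 1$ of independent generators of weight $\leq 2$. Taking \emph{all} of them in the Lemma and passing to \cref{shiftsigmabound}, the code obeys the $\left(\ell_{0},1\right)$-bound. If $\ell_{0}=1$ this \emph{is} the asserted inequality, so the whole problem is to show that codes with $\ell_{0}\geq 2$ also satisfy the $\left(1,1\right)$-bound.

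For large $n$ this is exactly the domination statement of the preceding Theorem specialized to $\ell=1$, $t=1$, i.e.\ \cref{mainresult}. Since $t=1$ I would make the threshold explicit rather than abstract: the $\left(\ell',1\right)$-bound attains its peak value $\ell'$ at $n=2\ell'$, and comparing this with the $\left(1,1\right)$-bound at the same point reduces to the elementary inequality $6\ell'-5\leq 2^{\ell'-1}$, supplemented by the comparisons at the neighboring integers. Carrying out these finitely many comparisons shows that the $\left(1,1\right)$-bound dominates every $\left(\ell',1\right)$-bound with $\ell'\geq 2$ once $n$ passes a small explicit value (around $n=11$), which settles all larger lengths.

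It then remains to treat the finitely many short lengths below that threshold. For $n\leq 7$ I would invoke the strict quantum Singleton bound for degenerate codes \cite{Rains1999a}, which gives $k\leq n-5$; this already implies the $\left(1,1\right)$-bound, precisely because $n-5\leq n-1-\log_{2}\!\left(1+3\left(n-2\right)\right)$ is equivalent to $\log_{2}\!\left(1+3\left(n-2\right)\right)\leq 4$, i.e.\ to $n\leq 7$. This leaves only $n\in\left\{8,9,10\right\}$. Here I would combine the $\left(\ell_{0},1\right)$-bound with Gottesman's theorem that every $t=1$ stabilizer code obeys the quantum Hamming bound \cite{Gottesman1997} to isolate, at each length, the single parameter set that is permitted by the Hamming bound yet violates the $\left(1,1\right)$-bound: namely $\codepar{8,3,3}$, $\codepar{9,4,3}$, and $\codepar{10,5,3}$. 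For each of these the counting inequality inside the Lemma forces $\ell_{0}$ to be large enough that the weight-$2$ generators must cover all of the qubits, pinning the low-weight part of the stabilizer down to a rigid configuration, and one then excludes these: $\codepar{8,3,3}$ must be nondegenerate, while $\codepar{9,4,3}$ and $\codepar{10,5,3}$ cannot exist at all.

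I expect the last step to be the real obstacle. Everything from the threshold upward is a routine comparison of the shifted Hamming curves already analyzed, and $n\leq 7$ is immediate from Singleton; but the three extremal parameters at $n=8,9,10$ sit in the narrow window between the quantum Hamming bound and the new bound, so neither general tool excludes them. Closing them demands a hands-on analysis of stabilizers built almost entirely from weight-$2$ generators covering every qubit (or an appeal to the known classification and nonexistence of these specific small codes), and this is the part of the argument that is genuinely special to small length.
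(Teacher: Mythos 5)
Your proposal is correct and takes essentially the same route as the paper's proof, which likewise combines the large-$n$ domination machinery (the paper computes $N\!\left(1\right)\leq 12$ from \cref{mainresult}; your integer-by-integer comparison tightening the threshold to $n\geq 11$ is the same calculation, made slightly sharper) with a finite check of the remaining short lengths. The exclusions you flag as the real obstacle --- that every $\codepar{8,3,3}$ is nondegenerate and that $\codepar{9,4,3}$ and $\codepar{10,5,3}$ do not exist --- are exactly what the paper's phrase ``checking all degenerate distance-3 codes of length less than 12'' amounts to (your Singleton and Hamming-bound reductions just organize that check), and these facts follow from the classification of distance-3 stabilizer codes by Yu et al.~\cite{Yu2013} already cited in the paper, so the final step closes by citation rather than new hands-on analysis.
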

\begin{proof}
This follows from the previous corollary where we can calculate that $N\!\left(1\right)\leq 12$, and by checking all degenerate distance-3 codes of length less than 12, we find that the result holds.
\end{proof}

Optimal distance-3 nondegenerate stabilizer codes have been completely described by Yu et al. \cite{Yu2013}, who showed that with a few exceptions most of these optimal codes saturate the quantum Hamming bound. Using this result, we can show precisely for which values of $n$ distance-3 degenerate stabilizer codes are not prohibited from being optimal:

\begin{corollary}\label{cordegenallow}
A degenerate stabilizer code is only able to encode the optimal number of logical qubits for the following lengths:
\begin{enumerate}
\item $n=8f_{m}+\left\{\pm1,2,3,4\right\}$ or
\item $n=f_{m+2}-\left\{\pm1,\pm2,3\right\}$,
\end{enumerate}
for some integer $m\geq1$ and $f_{m}=\left(4^{m}-1\right)/3$.
\end{corollary}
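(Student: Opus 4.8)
The plan is to combine the degenerate bound of \cref{cordist3} with the optimal dimension $k_{\mathrm{opt}}(n)$ of distance-$3$ stabilizer codes, whose optimizers were characterized by Yu et al.\ \cite{Yu2013}. Put $s(n)=n-k_{\mathrm{opt}}(n)$. Because \cref{cordist3} is strictly stronger than the quantum Hamming bound that the optimal codes obey, a degenerate code can attain the optimal dimension only if $k_{\mathrm{opt}}(n)\leq n-1-\log_{2}\!\left(3n-5\right)$, which rearranges to the single condition
\begin{equation}
2^{\,s(n)-1}\geq 3n-5,
\label{eq:allowcond}
\end{equation}
so the task becomes determining exactly which $n$ satisfy \cref{eq:allowcond}.

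To evaluate \cref{eq:allowcond} I would feed in the values of $s(n)$ from \cite{Yu2013}, which in the relevant range equal $\lceil\log_{2}(3n+1)\rceil$ on lengths whose optimal code saturates the quantum Hamming bound, with a correction $s(n)=\lceil\log_{2}(3n+1)\rceil+1$ on an explicit list of exceptional lengths. Setting $r=\lceil\log_{2}(3n+1)\rceil$, so that $2^{r-1}\leq 3n\leq 2^{r}-1$, the two cases separate cleanly. On an exceptional length $s(n)-1=r$, and since $2^{r}\geq 3n+1>3n-5$, \cref{eq:allowcond} holds automatically, so every exceptional length is allowed. On a saturating length $s(n)-1=r-1$, and \cref{eq:allowcond} reads $2^{r-1}\geq 3n-5$, which together with $2^{r-1}\leq 3n$ is equivalent to $3n-2^{r-1}\in\{0,1,\dots,5\}$, i.e.\ to $3n$ lying just above the power $2^{r-1}$.

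The core of the argument is to convert these two conditions into the stated families using $4^{m}=3f_{m}+1$. Reducing modulo $3$ fixes the feasible residues in the saturating case: when $r-1=2m$ is even, $2^{r-1}=4^{m}\equiv1\pmod{3}$ forces $3n-4^{m}\in\{2,5\}$, giving $n\in\{f_{m}+1,\,f_{m}+2\}$; when $r-1$ is odd, $2^{r-1}=2\cdot4^{m}\equiv2\pmod{3}$ forces $3n-2\cdot4^{m}\in\{1,4\}$, which after substituting $8f_{m-1}=(2\cdot4^{m}-8)/3$ gives $n\in\{8f_{m-1}+3,\,8f_{m-1}+4\}$. The exceptional lengths of \cite{Yu2013} cluster in the same two regions, appearing as $8f_{m}+\{-1,1,2\}$ just past each $8f_{m}$ and as $f_{m+2}-\{1,2,3\}$ just below each perfect-code length $f_{m+2}$. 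Collecting the saturating and exceptional lengths region by region then yields the two families: near $8f_{m}$ the exceptional offsets $\{-1,1,2\}$ merge with the (reindexed) saturating offsets $\{3,4\}$ to give $8f_{m}+\{\pm1,2,3,4\}$, and near $f_{m+2}$ the exceptional offsets $\{-1,-2,-3\}$ merge with the saturating offsets $\{1,2\}$ to give $f_{m+2}-\{\pm1,\pm2,3\}$.

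I expect the region-by-region bookkeeping to be the main obstacle. One must verify that the exceptional lengths tabulated in \cite{Yu2013} fall exactly into the two predicted clusters and occur nowhere else, keep the indices among $f_{m}$, $f_{m-1}$, and $f_{m+2}$ aligned so that the residue computation produces precisely the offset sets $\{\pm1,2,3,4\}$ and $\{\pm1,\pm2,3\}$ — in particular that the omitted offset $0$ at $8f_{m}$ and at $f_{m+2}$ corresponds to a length where \cref{eq:allowcond} fails — and confirm that \cref{eq:allowcond} never holds strictly between the clusters, so that no stray lengths are admitted. The shortest lengths, where the indexing with $m\geq1$ degenerates and the two clusters begin to overlap, are finite in number and would be checked directly, exactly as the distance-$3$ bound was verified for short codes in the proof of \cref{cordist3}.
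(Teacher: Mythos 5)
Your proposal is correct and is essentially the paper's own implicit argument: the paper gives no explicit proof of this corollary, presenting it as a direct consequence of combining \cref{cordist3} with the classification of optimal distance-3 stabilizer codes in \cite{Yu2013}, which is exactly your reduction to the condition $2^{\,s(n)-1}\geq 3n-5$ followed by the mod-$3$ case analysis on saturating versus exceptional lengths. Your caveat about the shortest lengths is well placed: carried out in full, your condition also admits $n=6$ (indeed marked as degenerate-allowed in \cref{pointdegenboundfig}, and realized by extending the perfect $\left[\!\left[5,1,3\right]\!\right]$ code by one qubit), a length that the corollary's indexing with $m\geq1$ fails to generate --- an off-by-one in the paper's statement rather than a flaw in your argument.
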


\cref{pointdegenboundfig} shows for $n\leq 25$ which optimal parameters are necessarily nondegenerate and which may potentially allow for a degenerate stabilizer code. As $n$ grows larger, the proportion of optimal stabilizer codes that are allowed to be degenerate grows smaller. Some degenerate stabilizer codes with these lengths are already known, for example those extending the perfect codes \cite{Li2013} or the quantum Hamming codes \cite{Gottesman1996} by one physical qubit using the construction in \cite[Theorem 6]{Calderbank1998}, while others are given as part of a construction of quantum-classical hybrid codes \cite{Nemec2021}. However, it remains an open question whether or not degenerate stabilizer codes encoding the optimal number of logical qubits for all of these lengths.

\section{Conclusion}

In this paper we introduce a new Hamming-like bound that applies to degenerate stabilizer codes, but not to nondegenerate stabilizer codes, after some positive integer $N\!\left(t\right)$. This new bound provides further evidence that degenerate stabilizer codes cannot violate the quantum Hamming bound, and suggests that the more degenerate (in terms of number of independent generators of weight less than $d$) a stabilizer code is, the further it can be bounded away from the quantum Hamming bound.

In the case of single-error-correcting stabilizer codes, a large number of both stabilizer and nonadditive codes saturate the quantum Hamming bound \cite{Gottesman1996, Grassl2009, Yu2013, Yu2015}, and some perfect codes achieve it with equality \cite{Gottesman1997, Li2013}. In this case, we proved that most of these optimal codes (in the sense that they saturate the linear programming bounds) cannot be degenerate. However, for codes with higher distances, even though the quantum Hamming bound seems to be as strict as the linear programming bounds \cite{Shor1997, Rains1998, Ashikhmin1999}, few if any known codes seem to get close to it \cite{GrasslONLINE}. In these cases it is hard to know if degenerate stabilizer codes must be bounded away from the optimal stabilizer codes as is the case when $t=1$.

\begin{table}[t]
    \centering
    \begin{tabular}{c|c|c|c}
        $t$ & Rains Bound & $M\!\left(t\right)$ & $N\!\left(t\right)$ \\ \hline
        1 & 5 & 5 & 12 \\
        2 & 11 & 9 & 60 \\
        3 & 17 & 14 & 150 \\
        4 & 23 & 20 & 288 \\
        5 & 29 & 25 & 470 \\
        6 & 35 & 30 & 696 \\
        7 & 41 & 35 & 980 \\
    \end{tabular}
    \caption{Comparison of $N\!\left(t\right)$ values from \cref{mainresult} for the bound for degenerate stabilizer codes and the analogous values $M\!\left(t\right)$ from \cite{Li2010} for the quantum Hamming bound for general quantum codes. Also shown is the bound $n\geq3d-4$ due to Rains \cite[Theorem 15]{Rains1999b} that was used in combination with $M\!\left(t\right)$ in \cite{Dallas2022} to prove the quantum Hamming bound for all $t<63$.}
    \label{tab:Nvaluetable}
\end{table}

We suspect that our new bound in \cref{degenbound} should generalize directly to the nonbinary case. A more interesting question is whether or not a similar bound can be found for nonadditive codes. In the case when $t=1$, there are multiple examples of nonadditive codes that get close to the quantum Hamming bound \cite{Grassl2009, Yu2015}, none of which appear to be degenerate, while alternatively there exist non-additive $\left(\!\left(7,2,3\right)\!\right)$ codes that are both impure and degenerate \cite{Pollatsek2004, Kubischta2023}, while some are impure but nondegenerate \cite{Cao2022}. It will be interesting to see whether the nondegenerate/degenerate or the pure/impure dichotomy is the correct choice for generalizations to general quantum codes.

Another open question is whether the $N\!\left(t\right)$ values from \cref{mainresult} can be improved upon. As shown in \cref{tab:Nvaluetable}, there is a significant gap between these values and the analogous $M\!\left(t\right)$ values for the quantum Hamming bound given by Li and Xing in \cite{Li2010}. While it is possible to obtain upper bounds on these values using techniques similar to ours, there the authors were able to get much lower values by clever manipulation of the linear programming bounds on the weight enumerators of codes. By using these values and a bound due to Rains \cite{Rains1999b}, Dallas et al. were able to show that the quantum Hamming bound holds for all codes with $t<63$. It would be interesting to see if the approach used by Li and Xing could lead to smaller values of $N\!\left(t\right)$ that could be used to prove \cref{mainresult} holds for all $n$ for values of $t>1$.

\section*{Acknowledgements}
The authors would like to thank Abhinav Anand and Balint Pato for fruitful discussions.
This work was supported by the Office of the Director of National Intelligence - Intelligence Advanced Research Projects Activity through an Army Research Office contract (W911NF-16-1-0082), the Army Research Office (W911NF-21-1-0005), and the National Science Foundation Institute for Robust Quantum Simulation (QLCI grant OMA-2120757). Support is also acknowledged from the Quantum Systems Accelerator.


\clearpage

\appendix

\onecolumngrid

\section*{Appendix} \label{appendix}
We look at the properties of the following function and its translations:

\begin{equation}
h_{t}\!\left(x\right)=
\begin{cases}
x-\log_{2}\!\left(\sum\limits_{i=0}^{t}3^{i}\binom{x}{i}\right) & \text{if } x\geq 0 \\
x & \text {if } x<0,
\end{cases}
\end{equation}
which when restricted to positive integers is the quantum Hamming bound for a $t$-error-correcting quantum code of length $n$. For simplicity, we refer to the polynomial inside the $\log$ function as $f_{t}\!\left(x\right)$. Real roots of $f_{t}$ will cause $h_{t}$ to be undefined, so we want to bound where these occur. We use the variable $n$ instead of $x$ when we want to restrict the function to the integers.

\begin{lemma}\label{binomiallemma}
For $0\leq n \leq t$, we have $h_{t}\!\left(n\right)=-n$.
\end{lemma}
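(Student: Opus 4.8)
The plan is a direct evaluation. Since the hypothesis supplies $n \geq 0$, I would invoke the first branch of $h_{t}$, so that $h_{t}\!\left(n\right) = n - \log_{2}\!\left(f_{t}\!\left(n\right)\right)$ with $f_{t}\!\left(n\right) = \sum_{i=0}^{t} 3^{i}\binom{n}{i}$. The claim $h_{t}\!\left(n\right) = -n$ is then equivalent to $\log_{2}\!\left(f_{t}\!\left(n\right)\right) = 2n$, i.e.\ to the closed-form identity $f_{t}\!\left(n\right) = 4^{n}$. So the whole task reduces to establishing this identity for integers in the range $0 \leq n \leq t$.

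The key step is to exploit the constraint $n \leq t$ to truncate the sum without loss. Because $\binom{n}{i} = 0$ whenever $i > n$, every term in the index range $n < i \leq t$ vanishes, and hence
\begin{equation}
f_{t}\!\left(n\right) = \sum_{i=0}^{t} 3^{i}\binom{n}{i} = \sum_{i=0}^{n} 3^{i}\binom{n}{i}.
\end{equation}
This is precisely the point where the upper limit $t$ becomes irrelevant, leaving a complete binomial sum.

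To finish, I would apply the binomial theorem to the truncated expression, writing $\sum_{i=0}^{n}\binom{n}{i}3^{i}1^{n-i} = \left(1+3\right)^{n} = 4^{n}$, which gives $f_{t}\!\left(n\right) = 4^{n}$ and therefore $h_{t}\!\left(n\right) = n - \log_{2}\!\left(4^{n}\right) = n - 2n = -n$. There is no genuine obstacle in this argument; the only subtlety worth flagging explicitly is the vanishing of $\binom{n}{i}$ for $i > n$, since that is exactly what allows the otherwise-inconvenient upper index $t$ to be replaced by $n$ so that the binomial theorem applies to a full expansion.
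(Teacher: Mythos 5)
Your proof is correct and follows the same approach as the paper's one-line proof, which simply cites the binomial theorem and the expansion $4^{n}=\left(1+3\right)^{n}$. You have merely spelled out the details the paper leaves implicit, in particular the vanishing of $\binom{n}{i}$ for $i>n$ that lets the upper limit $t$ be replaced by $n$.
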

\begin{proof}
It follows from the binomial theorem and the expansion of $4^{n}=\left(1+3\right)^{n}$ at these values.
\end{proof}

\begin{lemma}\label{rootboundlemma}
The function $h_{t}\!\left(x\right)$ is continuous for all $x\geq t-1$
\end{lemma}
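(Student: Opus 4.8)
The plan is to reduce the continuity claim to a positivity statement about the polynomial $f_t$. For the error-correcting regime $t\geq1$, the interval $x\geq t-1$ lies entirely in the region $x\geq0$ (the two overlap at the endpoint only when $t=1$), so on this interval $h_t$ is given by the single formula $x-\log_2 f_t(x)$. This is the composition of the polynomial $f_t$ with the logarithm, and the logarithm is continuous at every point where its argument is strictly positive; hence it suffices to show that $f_t(x)>0$ for all $x\geq t-1$.

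First I would expand the generalized binomial coefficient as $\binom{x}{i}=\frac{1}{i!}\prod_{j=0}^{i-1}(x-j)$, a product of $i$ linear factors whose smallest is $x-(i-1)$. For $0\leq i\leq t$ and $x\geq t-1$ we have $x-(i-1)\geq(t-1)-(i-1)=t-i\geq0$, so every factor in the product is nonnegative and therefore $\binom{x}{i}\geq0$. Multiplying by $3^i>0$ and summing, every term of $f_t(x)=\sum_{i=0}^{t}3^i\binom{x}{i}$ is nonnegative; since the $i=0$ term equals $3^0\binom{x}{0}=1$, we obtain the strict bound $f_t(x)\geq1>0$ on the whole interval. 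Note that $t-1$ is exactly the threshold at which the last factor $x-(t-1)$ of the top term $\binom{x}{t}$ stops being negative, which is why this is the natural cutoff for the clean argument.

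Finally I would assemble these facts. On $(t-1,\infty)$ the bound $f_t>0$ gives continuity of $x\mapsto x-\log_2 f_t(x)$ immediately. The only place the piecewise definition could cause trouble is the left endpoint when $t=1$, i.e.\ $x=0$, where the two branches of $h_t$ meet; there $f_1(0)=\binom{0}{0}=1$ gives $h_1(0)=0$, which agrees with $\lim_{x\to0^-}x=0$, so $h_t$ is continuous across the break as well.

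I do not expect a serious obstacle: the entire argument rests on the sign analysis of the linear factors of $\binom{x}{i}$. The only points requiring mild care are upgrading nonnegativity to strict positivity, which the constant $i=0$ term supplies, and verifying the branch matching at $x=0$ in the $t=1$ case.
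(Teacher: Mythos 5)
Your proof is correct and follows essentially the same route as the paper's: both reduce continuity to strict positivity of $f_t(x)$ on $[t-1,\infty)$ via the sign of the linear factors of $\binom{x}{i}$, with your use of the constant $i=0$ term (and the branch check at $x=0$ for $t=1$) being only a minor reorganization of the paper's handling of the endpoint $x=t-1$.
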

\begin{proof}
It is easy to see that $3^{i}\binom{x}{i}>0$ for all $x>i-1$, as it is the product of positive terms. Since it is sum of these positive terms, $f_{t}\!\left(x\right)>0$ also holds for all $x>t-1$. Additionally, we also have $f_{t}\!\left(t-1\right)>0$, since the only nonpositive term is equal to zero. Therefore $h_{t}\!\left(x\right)$ is defined for all $x\geq t-1$ and is continuous on those values.
\end{proof}

\begin{lemma}\label{shift1-1}
For all $n\in\mathbb{N}$, we have
\begin{equation}
h_{t}\!\left(n\right)\leq h_{t}\!\left(n-1\right)+1.
\end{equation}
\end{lemma}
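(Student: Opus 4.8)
The plan is to reduce the claimed one-step inequality to a monotonicity statement about the polynomial $f_t$. For integers $n\geq 1$ both $n$ and $n-1$ are nonnegative, so both sides of $h_{t}\!\left(n\right)\leq h_{t}\!\left(n-1\right)+1$ use the logarithmic branch of $h_t$, and the inequality expands to
\begin{equation}
n-\log_{2} f_{t}\!\left(n\right)\leq\left(n-1\right)-\log_{2} f_{t}\!\left(n-1\right)+1.
\end{equation}
Here the $n$ terms cancel, leaving the equivalent claim $\log_{2} f_{t}\!\left(n-1\right)\leq\log_{2} f_{t}\!\left(n\right)$, that is $f_{t}\!\left(n-1\right)\leq f_{t}\!\left(n\right)$. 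The logarithms are well defined throughout, since for every nonnegative integer argument $f_{t}$ is a sum of nonnegative binomial terms whose $i=0$ summand is $3^{0}\binom{n}{0}=1$, so $f_{t}\!\left(n\right)\geq 1>0$ (this is exactly the positivity already recorded in \cref{rootboundlemma} restricted to the integers).

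Next I would establish that $f_{t}$ is nondecreasing on the nonnegative integers. Writing $f_{t}\!\left(n\right)=\sum_{i=0}^{t}3^{i}\binom{n}{i}$ and applying Pascal's rule $\binom{n}{i}=\binom{n-1}{i}+\binom{n-1}{i-1}$ termwise gives $\binom{n}{i}\geq\binom{n-1}{i}$ for every $i$, so each summand satisfies $3^{i}\binom{n}{i}\geq 3^{i}\binom{n-1}{i}$ and hence $f_{t}\!\left(n\right)\geq f_{t}\!\left(n-1\right)$. By the reduction above, this settles the inequality for all integers $n\geq 1$, with equality attainable when the added binomial contributions vanish.

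Finally I would dispatch the boundary case $n=0$ separately, since there $n-1=-1<0$ falls on the linear branch with $h_{t}\!\left(-1\right)=-1$, whereas $h_{t}\!\left(0\right)=0-\log_{2} f_{t}\!\left(0\right)=0$ because $f_{t}\!\left(0\right)=1$; the required inequality $0\leq-1+1$ then holds with equality. (If $\mathbb{N}$ is taken to start at $1$, this case is vacuous.) I do not anticipate a genuine obstacle: the content of the lemma is entirely the monotonicity of $f_{t}$, which is immediate from Pascal's rule, and the only points demanding any care are the switch between the two branches of $h_{t}$ at the boundary and the verification that $f_{t}$ does not vanish on the relevant integers, both of which are routine.
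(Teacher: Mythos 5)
Your proof is correct and takes essentially the same route as the paper's: cancel the linear terms so the inequality reduces to $f_{t}\!\left(n-1\right)\leq f_{t}\!\left(n\right)$, which holds by termwise monotonicity of the binomial coefficients, with the boundary case where $n-1$ falls on the linear branch checked separately. The only cosmetic differences are that you justify the monotonicity via Pascal's rule and fold $n=1$ into the general reduction, whereas the paper dispatches that point by invoking \cref{binomiallemma}.
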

\begin{proof}
For $n\leq 0$, the two functions have the same value. At the point $n=1$, $-1=h_{t}\!\left(n\right)\leq h_{t}\!\left(n-1\right)+1=1$ by \cref{binomiallemma}. For $n>1$, this is equivalent to showing that
\begin{equation}
\sum\limits_{i=0}^{t}3^{i}\binom{n-1}{i}\leq\sum\limits_{i=0}^{t}3^{i}\binom{n}{i},
\end{equation}
which is true because
\begin{equation}
\binom{n-1}{i}\leq\binom{n}{i}
\end{equation}
holds for all $i\geq0$.
\end{proof}

We would like to identify the general structure of $h_{t}\!\left(n\right)$ so that we can identify how it interacts with its shifts in \cref{shiftprop}. In \cref{hminmax}, we show that $h_{t}\!\left(n\right)$ has two local optima, a local maxima at $n=0$ and a local minima between $n=2t-2$ and $n=2t$. In order to prove this, we must prove several lemmas first.

The derivative of $\binom{x}{i}$ is given by \begin{equation}\frac{d}{dx}\binom{x}{i}=\binom{x}{i}\sum\limits_{k=0}^{i-1}\frac{1}{x-k}.\end{equation}

\begin{lemma}\label{firstderivlowerlemma}
For all $t\in\mathbb{N}$, we have $h'_{t}\!\left(2t-2\right)<0$.
\end{lemma}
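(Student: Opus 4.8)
The plan is to reduce the claim to a single inequality about a logarithmic derivative and then control that derivative by a concentration-plus-convexity argument. With $f_t(x)=\sum_{i=0}^{t}3^{i}\binom{x}{i}$ as in the appendix, we have $h_t(x)=x-\log_2 f_t(x)$ for $x\geq 0$, so
\begin{equation}
h'_t(x)=1-\frac{1}{\ln 2}\,\frac{f'_t(x)}{f_t(x)}.
\end{equation}
Hence $h'_t(2t-2)<0$ is equivalent to $f'_t(2t-2)/f_t(2t-2)>\ln 2$. Using the stated derivative $\frac{d}{dx}\binom{x}{i}=\binom{x}{i}\sum_{k=0}^{i-1}\frac{1}{x-k}$ and setting $w_i=3^i\binom{2t-2}{i}$ together with $H_i=\sum_{k=0}^{i-1}\frac{1}{2t-2-k}$ (so $H_0=0$), the target becomes the statement that the weighted mean
\begin{equation}
\frac{f'_t(2t-2)}{f_t(2t-2)}=\frac{\sum_{i=0}^{t}w_i H_i}{\sum_{i=0}^{t}w_i}
\end{equation}
exceeds $\ln 2$. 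The case $t=1$ is immediate, since then $2t-2=0$, $f_1=1+3x$, and the ratio equals $3>\ln 2$; so I assume $t\geq 2$, where every $\binom{2t-2}{i}$ with $i\leq t$ is positive.

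To bound the weighted mean from below I would first bound each $H_i$. Since $k\mapsto \tfrac{1}{2t-2-k}$ is increasing, the left Riemann sum gives $H_i\geq \int_{-1}^{i-1}\frac{ds}{2t-2-s}=\ln\frac{2t-1}{2t-1-i}=-\ln\!\left(1-\tfrac{i}{2t-1}\right)$, strictly for $i\geq 1$. The function $\phi(i)=-\ln\!\left(1-\tfrac{i}{2t-1}\right)$ is convex in $i$, so Jensen's inequality applied to the probability weights $w_i/\sum_j w_j$ yields
\begin{equation}
\frac{\sum_{i}w_i H_i}{\sum_i w_i}>\phi(\bar{i})=-\ln\!\left(1-\frac{\bar{i}}{2t-1}\right),\qquad \bar{i}=\frac{\sum_i i\,w_i}{\sum_i w_i}.
\end{equation}
It therefore suffices to prove the mean-index bound $\bar{i}\geq t-\tfrac12$, for then $\tfrac{\bar{i}}{2t-1}\geq\tfrac12$ and the right-hand side is at least $-\ln\tfrac12=\ln 2$.

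The decisive step is this concentration estimate $\bar{i}\geq t-\tfrac12$, equivalently $\sum_{i=0}^{t}(2i-2t+1)w_i\geq 0$. I would prove it either directly, noting that the weight ratios $w_{i}/w_{i+1}=\tfrac13\cdot\tfrac{i+1}{2t-2-i}$ force concentration at the top index (the top ratio is $w_{t-1}/w_t=\tfrac13\cdot\tfrac{t}{t-1}$ while all lower ratios are at most $\tfrac13$, giving at least geometric decay downward), or by the cleaner algebraic route: rewriting $\sum_i(2i-2t+1)w_i$ via $i\binom{2t-2}{i}=(2t-2)\binom{2t-3}{i-1}$ and the Pascal-type recurrence $f_t(x)=f_t(x-1)+3f_{t-1}(x-1)$ collapses the inequality to the single concrete bound
\begin{equation}
4(t-2)\,f_{t-1}(2t-3)\geq (2t-1)\,3^{t}\binom{2t-3}{t}.
\end{equation}
Writing $a_{t-1}=3^{t-1}\binom{2t-3}{t-1}$, the right side equals $\tfrac{3(2t-1)(t-2)}{t}a_{t-1}$, so this reduces to showing $f_{t-1}(2t-3)/a_{t-1}\geq \tfrac32-\tfrac{3}{4t}$, which follows from the near-geometric ($\leq\tfrac13$) decay of the summands of $f_{t-1}(2t-3)$ below its top term.

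The main obstacle is exactly this concentration estimate, and the reason is that the entire inequality is asymptotically tight: as $t\to\infty$ one has $\bar{i}\to t-\tfrac12$ and $H_t(2t-2)\to\ln 2$, so the margin $f'_t(2t-2)/f_t(2t-2)-\ln 2$ shrinks to $0$, with equality $\bar{i}=t-\tfrac12$ already occurring at $t=2$. Consequently every estimate in the chain must be sharp to leading order in $1/t$: a crude Riemann or two-term geometric bound loses precisely the $O(1/t)$ quantity at stake and fails for large $t$. The challenge is to pick bounds tight enough to survive the $t\to\infty$ limit (the reduced inequality above does, since $f_{t-1}(2t-3)/a_{t-1}\to\tfrac32$ from below while the required threshold $\tfrac32-\tfrac{3}{4t}$ stays strictly less) while keeping the argument elementary, supplemented by direct verification for the few smallest values of $t$.
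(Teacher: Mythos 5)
Your proposal takes a genuinely different route from the paper's proof, and most of it is sound. Where the paper sandwiches $\ln 2$ between harmonic sums and then grinds through the normalized weights $z_j$, $Z$, and $F$, you package $f'_t(2t-2)/f_t(2t-2)$ as a weighted mean of the partial harmonic sums $H_i$, lower-bound each $H_i$ by the Riemann-sum comparison $H_i\geq-\ln\left(1-\tfrac{i}{2t-1}\right)$, and invoke Jensen to reduce the whole lemma to the single first-moment estimate $\bar{i}\geq t-\tfrac12$. That reduction is correct (including the strictness bookkeeping via the strict Riemann bound, and the separate $t=1$ case), and your algebra collapsing $\bar{i}\geq t-\tfrac12$ to $4(t-2)f_{t-1}(2t-3)\geq(2t-1)3^{t}\binom{2t-3}{t}$, i.e.\ to $R(t):=f_{t-1}(2t-3)/a_{t-1}\geq\tfrac32-\tfrac{3}{4t}$, checks out.

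The gap is the final step, which is exactly where all the difficulty lives. Your justification --- that $R(t)\to\tfrac32$ from below while the threshold $\tfrac32-\tfrac{3}{4t}$ ``stays strictly less'' --- is not an argument: the threshold also tends to $\tfrac32$, and at precisely the same first-order rate, since
\begin{equation}
\frac32-R(t)=\sum_{j\geq2}\frac{1}{3^{j}}\left(1-\prod_{m=1}^{j}\frac{t-m}{t+m-2}\right)=\frac{3}{4t}+O\!\left(\frac{1}{t^{2}}\right),
\end{equation}
so comparing limits decides nothing. Likewise, ``near-geometric ($\leq\tfrac13$) decay'' of the summands only yields the upper bound $R(t)\leq\tfrac32$, which points the wrong way, and the same tightness kills your first suggested route: bounding $w_{t-j}/w_t\leq\tfrac{t}{t-1}3^{-j}$ gives $\sum_{j}(2j-1)w_{t-j}/w_t\leq\tfrac{t}{t-1}>1$, which fails to prove $\bar{i}\geq t-\tfrac12$. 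The good news is that the gap is fixable with estimates sharp to exactly the required order: writing the $j$-th product as $\prod_{m=1}^{j}(1-a_m)$ with $a_m=\tfrac{2m-2}{t+m-2}\leq\tfrac{2m-2}{t}$, the Weierstrass inequality $1-\prod_{m}(1-a_m)\leq\sum_{m}a_m\leq\tfrac{j(j-1)}{t}$ gives $\tfrac32-R(t)\leq\tfrac1t\sum_{j\geq2}j(j-1)3^{-j}=\tfrac{3}{4t}$ exactly, which is what you need (the equality case $t=2$ is absorbed by the strict inequality earlier in your chain). With that inserted your proof is complete and arguably cleaner than the paper's; without it, the proposal asserts rather than proves the decisive inequality.
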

\begin{proof}
Since we have that
\begin{equation}\label{firstderivative}
h'_{t}\!\left(x\right)=1-\frac{f'_{t}\!\left(x\right)}{\ln2\cdot f_{t}\!\left(x\right)},
\end{equation}
we can see that $h'_{t}\!\left(x\right)<0$ if and only if $\ln2\cdot f_{t}\!\left(x\right)<f'_{t}\!\left(x\right)$. Therefore we will prove that
\begin{equation}\label{targetineq}
\ln2\sum\limits_{i=0}^{t}3^{i}\binom{2t-2}{i}<\sum\limits_{i=0}^{t}3^{i}\binom{2t-2}{i}\sum\limits_{k=0}^{i-1}\frac{1}{2t-2-k}.
\end{equation}

Note that we can bound $\ln2$ above and below by the following sums:
\begin{equation}
\sum_{k=0}^{t-2} \frac{1}{2t-2-k} < \ln2 = \int_{0}^{t-1}\frac{dx}{2t-2-x} < \sum_{k=1}^{t-1} \frac{1}{2t-2-k},
\end{equation}
or equivalently,
\begin{equation}\label{ln2bound}
\sum_{k=0}^{t-2} \frac{1}{2t-2-k} < \ln2 < \sum_{k=0}^{t-2} \frac{1}{2t-2-k}+\frac{1}{2(t-1)}.
\end{equation}

The following inequality is equivalent to our target inequality in \cref{targetineq}:
\begin{equation}\label{bothsideposineq}
\sum_{i=1}^{t-1} 3^i{2t-2 \choose i}\left(\ln 2 -\sum_{k=0}^{i-1}\frac{1}{2t-2-k}\right) < 3^t{2t-2 \choose t}\left(\sum_{k=0}^{t-1}\frac{1}{2t-2-k}-\ln 2\right),
\end{equation}
where all terms in the sum on the left-hand side are positive. Using the fact that
\begin{equation}
	\frac{{2t-2 \choose t-j}}{{2t-2 \choose t}} = \frac{(2t-2)!}{(t-j)!(t-2+j)!}\frac{t!(t-2)!}{(2t-2)!} = \frac{t(t-1)\cdots(t-j+1)}{(t-1)t\cdots(t-2+j)},
\end{equation}
\cref{bothsideposineq} is equivalent to
\begin{align}
    \left(\sum_{k=0}^{t-1}\frac{1}{2t-2-k}-\ln 2\right) >& \frac{1}{3}\frac{t}{t-1}\left(\ln 2 -\sum_{k=0}^{t-2}\frac{1}{2t-2-k}\right)+\frac{1}{3^2}\left(\ln 2 -\sum_{k=0}^{t-3}\frac{1}{2t-2-k}\right) \nonumber \\
    &+\frac{1}{3^3}\frac{t-2}{t+1}\left(\ln 2 -\sum_{k=0}^{t-4}\frac{1}{2t-2-k}\right)+\dots, 
\end{align}
which is equivalent to
\begin{align}\label{expandedsum}
    \ln 2\left(1+\frac{1}{3}\frac{t}{t-1}+\frac{1}{3^2}+\frac{1}{3^3}\frac{t-2}{t+1}+\dots\right) <& \sum_{k=0}^{t-1}\frac{1}{2t-2-k} +\frac{1}{3}\frac{t}{t-1}\sum_{k=0}^{t-2}\frac{1}{2t-2-k}+\frac{1}{3^2}\sum_{k=0}^{t-3}\frac{1}{2t-2-k} \nonumber\\
    &+\frac{1}{3^3}\frac{t-2}{t+1}\sum_{k=0}^{t-4}\frac{1}{2t-2-k}+\dots
\end{align}

We define the following terms:
\begin{align}
        z_0&=1,\;z_1=\frac{1}{3}\frac{t}{t-1},\;z_2=\frac{1}{3^2},\;z_3=\frac{1}{3^3}\frac{t-2}{t+1},\;\dots,\;z_j=\frac{1}{3^j}\frac{{2t-2 \choose t-j}}{{2t-2 \choose t}}, \nonumber \\
	Z&=z_0+z_1+\dots+z_{t-1}, \nonumber\\
	F&=\left(z_1+\dots+z_{t-1}\right)\frac{1}{t-1}+\left(z_2+\dots+z_{t-1}\right)\frac{1}{t}+\dots+z_{t-1}\frac{1}{2t-3}. \nonumber
\end{align}
Using this new notation, we can rewrite \cref{expandedsum} as
\begin{align}
	 Z\ln 2 &< z_0 \sum_{k=0}^{t-1}\frac{1}{2t-2-k} +z_1\sum_{k=0}^{t-2}\frac{1}{2t-2-k} + z_2\sum_{k=0}^{t-3}\frac{1}{2t-2-k} + z_3\sum_{k=0}^{t-4}\frac{1}{2t-2-k}+\dots  \\
  &=z_0\frac{1}{t-1}+(z_0+z_1)\frac{1}{t}+(z_0+z_1+z_2)\frac{1}{t+1}+\dots+(z_0+\dots+z_{t-1})\frac{1}{2t-2}  \\
  &=Z\left(\frac{1}{t-1}+\frac{1}{t}+\dots+\frac{1}{2t-2}\right)-F,
\end{align}
which is equivalent to
\begin{equation} 
 \ln 2 < \frac{1}{2t-2}+\frac{1}{2t-1}+\dots+\frac{1}{t}+\left(\frac{1}{t-1}-\frac{F}{Z}\right) = \sum_{k=0}^{t-2} \frac{1}{2t-2-k}+\left(\frac{1}{t-1}-\frac{F}{Z}\right).
\end{equation}

Note that
\begin{equation}\label{subtarget}
\left(\frac{1}{t-1}-\frac{F}{Z}\right) > \frac{1}{2(t-1)}\;\;\Leftrightarrow\;\;2(t-1)F-Z+1 < 1.
\end{equation}
So by the upper bound of $\ln 2$ given in \cref{ln2bound}, proving the right-hand side of the statement in \cref{subtarget} is sufficient for proving our target inequality \cref{targetineq}.

To prove the right-hand side of the statement in \cref{subtarget} holds, we expand it out as
\begin{align}
	&2(t-1)F-Z+1 \\
= & 2\left(z_1+\dots+z_{t-1}\right)+2\left(z_2+\dots+z_{t-1}\right)\frac{t-1}{t}+\dots+2z_{t-1}\frac{t-1}{2t-3}-\left(1+z_1+\dots+z_{t-1}\right)+1 \\
= & z_1+\left(1+2\frac{t-1}{t}\right)z_2+\left(1+2\frac{t-1}{t}+2\frac{t-1}{t+1}\right)z_3+...+\left(1+2\frac{t-1}{t}\dots+2\frac{t-1}{2t-3}\right)z_{t-1} \\
= & \left[\frac{t}{3(t-1)}+\frac{1}{9}+\frac{2(t-1)}{9t}\right]+\left(1+\frac{2(t-1)}{t}+\frac{2(t-1)}{t+1}\right)z_3+...+\left(1+2\frac{t-1}{t}\dots+2\frac{t-1}{2t-3}\right)z_{t-1} \\
= & \left[\frac{1}{3(t-1)}-\frac{2}{9t}+\frac{2}{3}\right]+\left(1+\frac{2(t-1)}{t}+\frac{2(t-1)}{t+1}\right)z_3+...+\left(1+2\frac{t-1}{t}\dots+2\frac{t-1}{2t-3}\right)z_{t-1} \\
< & \left[\frac{1}{3(t-1)}-\frac{2}{9t}+\frac{2}{3}\right]+(1+2r+2r)\frac{1}{3^3}+...+(1+2r(t-2))\frac{1}{3^{t-1}} \\
= & \left[\frac{1}{3(t-1)}-\frac{2}{9t}+\frac{2}{3}\right]+\sum_{k=3}^{t-1}\frac{1+2r(k-1)}{3^k},
\end{align}
where the inequality uses the facts that $z_p < \frac{1}{3^p}$ when $p\geq 3$ and $\frac{(t-1)}{t+q} < \frac{(t-1)}{t}$ when $q > 1$. We can then bound this sum by
\begin{align}
	\sum_{k=3}^{t-1}\frac{1+2r(k-1)}{3^k} &< \sum_{k=3}^{\infty}\frac{1+2r(k-1)}{3^k} \\
 &=(1-2r)\sum_{k=3}^{\infty}\frac{1}{3^k}+2r\sum_{k=3}^{\infty}\frac{k}{3^k} \\
 &=(1-2r)\left(\frac{1}{18}\right)+2r\left(\frac{7}{36}\right) \\
 &=\frac{1+5r}{18} \\
 &=\frac{1}{18}+\frac{5}{18}\left(\frac{t-1}{t}\right).
\end{align}
Recombining, we get
\begin{align}
2(t-1)F-Z+1&<\frac{1}{3(t-1)}-\frac{2}{9t}+\frac{2}{3}+\frac{1}{18}+\frac{5}{18}\left(\frac{t-1}{t}\right) \\
 &=1-\frac{t-3}{6t(t-1)} \\
 &\leq 1,
\end{align}
when $t\geq 3$, which is sufficient to show that our desired result is true when $t\geq 3$. We can check the cases $t=1,2$ numerically and see that $h'_{1}\!\left(0\right)=-3.328$ (from the positive direction) and $h'_{2}\!\left(2\right)=-0.488$, proving our desired result.

\end{proof}

\begin{lemma}\label{firstderivupperlemma}
We have $h'_{t}\!\left(2t\right)>0$ for all $t\in\mathbb{N}$.
\end{lemma}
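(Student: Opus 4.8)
The plan is to mirror the approach of \cref{firstderivlowerlemma}, but observe that the sign of the dominant term now works in our favor, so no delicate cancellation is needed. Starting from the derivative formula in \cref{firstderivative}, we have that $h'_{t}\!\left(2t\right)>0$ if and only if $f'_{t}\!\left(2t\right)<\ln 2\cdot f_{t}\!\left(2t\right)$. Substituting the expression for $f'_{t}$ obtained from $\frac{d}{dx}\binom{x}{i}=\binom{x}{i}\sum_{k=0}^{i-1}\frac{1}{x-k}$, this is equivalent to showing
\begin{equation}
\sum_{i=0}^{t} 3^i \binom{2t}{i}\left(\ln 2 - \sum_{k=0}^{i-1}\frac{1}{2t-k}\right) > 0.
\end{equation}

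First I would establish an integral bound on $\ln 2$ analogous to \cref{ln2bound}, now centered at the point $x=2t$ rather than $x=2t-2$. Writing $\ln 2 = \ln(2t)-\ln t = \int_{0}^{t}\frac{dx}{2t-x}$ and comparing the integral of the increasing integrand $\frac{1}{2t-x}$ to its left Riemann sum over the unit subintervals of $[0,t]$ yields the strict lower bound
\begin{equation}
\sum_{k=0}^{t-1}\frac{1}{2t-k} < \ln 2.
\end{equation}
Unlike the previous lemma, only this lower bound is needed; the matching upper bound plays no role.

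The key observation is then that for every index $0\le i\le t$ the inner sum $\sum_{k=0}^{i-1}\frac{1}{2t-k}$ is an initial segment of the sum appearing above, so that $\sum_{k=0}^{i-1}\frac{1}{2t-k}\le\sum_{k=0}^{t-1}\frac{1}{2t-k}<\ln 2$. Hence each factor $\ln 2 - \sum_{k=0}^{i-1}\frac{1}{2t-k}$ is strictly positive; since the binomial coefficients $\binom{2t}{i}$ are nonnegative, every term of the sum is nonnegative, with the $i=0$ term equal to $\ln 2>0$. The sum is therefore positive, giving $h'_{t}\!\left(2t\right)>0$.

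The contrast with \cref{firstderivlowerlemma} is instructive: there, at $x=2t-2$, the heavily weighted $i=t$ term fell on the negative side of the analogous bound and had to be shown to dominate the positive terms, which was the source of all the difficulty. Here the shift to $x=2t$ moves exactly that term onto the positive side, so the inequality holds term by term and no comparison of magnitudes is required. Consequently the only technical content is the integral comparison establishing the lower bound on $\ln 2$, which I expect to be routine, and no separate numerical treatment of small $t$ is needed, since the $i=0$ term alone guarantees strict positivity for every $t\in\mathbb{N}$.
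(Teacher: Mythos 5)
Your proposal is correct and takes essentially the same approach as the paper: both arguments bound each inner sum $\sum_{k=0}^{i-1}\frac{1}{2t-k}$ by the full sum $\sum_{k=0}^{t-1}\frac{1}{2t-k}$ and then show this full sum is strictly less than $\ln 2$ via the same integral comparison (your left Riemann sum of $\frac{1}{2t-x}$ on $[0,t]$ is the paper's right Riemann sum of $\frac{1}{x}$ on $[t,2t]$ after a change of variables). Your rearrangement into a termwise-positive sum is only a cosmetic repackaging of the paper's chain of inequalities $f'_{t}\!\left(2t\right)\leq f_{t}\!\left(2t\right)\sum_{k=0}^{t-1}\frac{1}{2t-k}<\ln2\cdot f_{t}\!\left(2t\right)$.
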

\begin{proof}
By \cref{firstderivative}, we can see that $h'_{t}\!\left(x\right)>0$ if and only if $f'_{t}\!\left(x\right)<\ln2\cdot f_{t}\!\left(x\right)$.


To show $h'_{t}\!\left(2t\right)>0$, we have
\begin{align}
f'_{t}\!\left(2t\right) & = \sum\limits_{i=0}^{t}3^{i}\binom{2t}{i}\sum\limits_{k=0}^{i-1}\frac{1}{2t-k} \\
 & \leq \sum\limits_{i=0}^{t}3^{i}\binom{2t}{i}\sum\limits_{k=0}^{t-1}\frac{1}{2t-k} \\
 & = \sum\limits_{i=0}^{t}3^{i}\binom{2t}{i}\sum\limits_{k=1}^{t}\frac{1}{t+k} \\
 & < \sum\limits_{i=0}^{t}3^{i}\binom{2t}{i}\int\limits_{t}^{2t}\frac{dx}{x} \\
 & = \ln2\cdot\sum\limits_{i=0}^{t}3^{i}\binom{2t}{i}\\
 & = \ln2\cdot f_{t}\!\left(2t\right).
\end{align}
\end{proof}

\begin{lemma}\label{secondderivativelemma}
For all $x\geq t$, we have $h_{t}''\!\left(x\right)> 0$.
\end{lemma}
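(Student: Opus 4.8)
Prove that for all $x \geq t$, the second derivative $h''_t(x) > 0$.

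The plan is to recognize that the claim $h_t''(x)>0$ is exactly the statement that $f_t$ is strictly log-concave on $[t,\infty)$. Differentiating \cref{firstderivative} once more gives
\begin{equation}
h_t''(x)=\frac{1}{\ln 2}\cdot\frac{\left(f_t'(x)\right)^2-f_t(x)f_t''(x)}{\left(f_t(x)\right)^2},
\end{equation}
and since $f_t(x)>0$ for $x\ge t$ by \cref{rootboundlemma}, it suffices to prove the strict inequality $\left(f_t'(x)\right)^2>f_t(x)f_t''(x)$.

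First I would set up convenient notation: write $c_k=1/(x-k)$ and $\psi_i=\sum_{k=0}^{i-1}c_k$, so that the derivative-of-binomial identity yields $f_t'=\sum_i 3^i\binom{x}{i}\psi_i$ and $f_t''=\sum_i 3^i\binom{x}{i}\left(\psi_i^2+\psi_i'\right)$ with $\psi_i'=-\sum_{k=0}^{i-1}c_k^2$. Interpreting $w_i=3^i\binom{x}{i}/f_t(x)$ as a probability distribution on $i\in\{0,\dots,t\}$ and regarding $\psi$ as a random variable, a short symmetrization of the double sum $\left(f_t'\right)^2-f_tf_t''$ recasts the target inequality as
\begin{equation}
\operatorname{Var}_w(\psi)<\mathbb{E}_w\!\left[\textstyle\sum_{k=0}^{i-1}c_k^2\right],
\end{equation}
i.e. the variance of the partial sums $\psi_i$ is strictly dominated by the expected sum of squared increments. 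Expanding the variance through the nested indicators $X_k=\mathbf{1}[i>k]$ (so $\psi=\sum_k c_kX_k$ and $\operatorname{Cov}(X_k,X_l)=p_{\max(k,l)}-p_kp_l$ with $p_k=\Pr_w[i>k]$) makes this equivalent, after collecting terms, to
\begin{equation}
\sum_{k}c_k^2p_k^2>2\sum_{k<l}c_kc_l\,(1-p_k)\,p_l .
\end{equation}
The restriction to $x\ge t$ enters here essentially: for $k\le t-1$ we have $x-k\ge1$, so every increment satisfies $c_k\le1$, while $f_t$ stays positive and smooth by \cref{rootboundlemma}. This hypothesis cannot be dropped, since $f_t$ is only eventually log-concave.

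The main obstacle is the control of the far-off-diagonal terms. Writing $\left(f_t'\right)^2-f_tf_t''=\tfrac12\sum_{i,j}3^i3^j\binom{x}{i}\binom{x}{j}\,T_{ij}$ with $T_{ij}=2D_{\min(i,j)}-2\sum_{\min(i,j)\le k<l<\max(i,j)}c_kc_l$ and $D_i=\sum_{k<i}c_k^2$, the Cauchy–Schwarz bound $(\psi_i-\psi_j)^2\le(i-j)\sum_{k=j}^{i-1}c_k^2$ gives $T_{i,i\pm1}\ge0$, so adjacent index pairs are harmless; but $T_{ij}$ can be negative once $|i-j|\ge2$, so term-by-term positivity fails. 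Positivity of the full sum must therefore come from the rapid decay of the weights away from the diagonal, supplied by the log-concavity (unimodality) of the sequence $3^i\binom{x}{i}$: its consecutive ratio $3(x-i)/(i+1)$ is strictly decreasing in $i$, forcing the products $w_iw_j$ to shrink geometrically as $|i-j|$ grows. Quantifying this geometric decay against the at-most-linear-in-$|i-j|$ growth of the negative $T_{ij}$ contributions—equivalently, bounding $\operatorname{Var}_w(\psi)$ using the concentration of a log-concave distribution together with $c_k\le1$—is the technical heart of the argument and the step I expect to require the most care, which I would handle by splitting the sum into the pieces $|i-j|=1$ and $|i-j|\ge2$ and estimating the latter with the geometric bound on the weight ratios.
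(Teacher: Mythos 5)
Your reduction is sound and in fact retraces the paper's own first steps: rewrite $h_t''(x)>0$ as $\left(f_t'(x)\right)^2>f_t(x)f_t''(x)$, expand both sides using $\frac{d}{dx}\binom{x}{i}=\binom{x}{i}\sum_{k<i}\frac{1}{x-k}$, and compare the resulting double sums over index pairs. Your probabilistic repackaging is algebraically correct: with weights $w_i\propto 3^i\binom{x}{i}$ the target is exactly $\operatorname{Var}_w(\psi)<\mathbb{E}_w\!\left[\sum_{k<i}c_k^2\right]$, and the indicator expansion to $\sum_k c_k^2p_k^2>2\sum_{k<l}c_kc_l\left(1-p_k\right)p_l$ checks out. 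Your structural observation is also correct: the symmetrized terms $T_{ij}=D_i+D_j-\left(\psi_i-\psi_j\right)^2$ are nonnegative only for $|i-j|\le1$, e.g.\ $T_{20}=-2c_0c_1<0$. Notably, this observation contradicts the paper's own proof, which at precisely this point asserts that it suffices to verify the inequality for each $\left(i,r\right)$ pair separately; for $\left(i,r\right)=\left(2,0\right)$ that termwise claim reads $0>\psi_2^2-D_2=2c_0c_1$ and is false (the termwise comparison holds when $r\ge i$ but can fail when $r<i$). So you have correctly diagnosed where the real difficulty lies --- a difficulty the paper's two-line finish does not actually overcome.

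The problem is that your proposal stops there: the inequality $\operatorname{Var}_w(\psi)<\mathbb{E}_w\!\left[D\right]$, which is the entire content of the lemma, is never proved --- only accompanied by a plan, and the plan as stated looks unworkable. The weights $w_i=3^i\binom{x}{i}$ decay geometrically away from the \emph{mode} $i^{\ast}\approx\left(3x-1\right)/4$ of the distribution, not away from the diagonal $i=j$: for pairs with $|i-j|=2,3$ straddling the mode, the consecutive ratio $3\left(x-i\right)/\left(i+1\right)$ is close to $1$, so $w_iw_j$ is comparable to the adjacent-pair weights and the negative $T_{ij}$ cannot be discarded by weight decay; they must instead be cancelled against the positive diagonal and adjacent terms $2D_{\min}$ carrying comparable weight. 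The inequality is moreover genuinely tight: at $x=t=10$ the normalized weights are binomial with parameters $\left(10,3/4\right)$, and one computes $\operatorname{Var}_w(\psi)\approx0.30$ against $\mathbb{E}_w\!\left[D\right]\approx0.35$, so no crude geometric domination of the off-diagonal block can succeed, and obvious shortcuts (e.g.\ bounding every covariance via Cauchy--Schwarz, which requires $t-\mathbb{E}_w\!\left[i\right]\le1$) fail for $x$ near $t$ once $t\ge5$. As it stands, your submission is a correct reformulation plus an accurate diagnosis of the obstruction --- one that applies equally to the published argument --- but it is not a proof of the lemma.
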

\begin{proof}
When restricted to $x\geq 0$, we have
\begin{equation}
h_{t}''\!\left(x\right)=\frac{\left(f_{t}'\!\left(x\right)\right)^2-f_{t}\!\left(x\right)f_{t}''\!\left(x\right)}{\ln2\left(f_{t}\!\left(x\right)\right)^{2}},
\end{equation}
where 
\begin{equation}
f_{t}'\!\left(x\right)=\sum\limits_{i=0}^{t}3^{i}\binom{x}{i}\sum\limits_{k=0}^{i-1}\frac{1}{x-k},
\end{equation}
and 
\begin{align}
f_{t}''\!\left(x\right) & = \sum\limits_{i=0}^{t}3^{i}\sum\limits_{k=0}^{i-1}\frac{d}{dx}\left(\binom{x}{i}\frac{1}{x-k}\right) \\
 & = \sum\limits_{i=0}^{t}3^{i}\binom{x}{i}\sum\limits_{k=0}^{i-1}\frac{1}{x-k}\left(\sum\limits_{j=0}^{i-1}\frac{1}{x-j}-\frac{1}{x-k}\right).
\end{align}

Note that by \cref{rootboundlemma}, $h_{t}''\!\left(x\right)$ will have no discontinuities for $x\geq t$. The following inequalities are equivalent:
\begin{align}
h_{t}''\!\left(x\right) & > 0 \\
\left(f_{t}'\!\left(x\right)\right)^{2} & > f_{t}\!\left(x\right)f_{t}''\!\left(x\right) \\
\sum\limits_{i=0}^{t}3^{i}\binom{x}{i}\sum\limits_{k=0}^{i-1}\frac{1}{x-k}\sum\limits_{r=0}^{t}3^{r}\binom{x}{r}\sum\limits_{s=0}^{r-1}\frac{1}{x-s} & > \sum\limits_{r=0}^{t}3^{r}\binom{x}{r} \sum\limits_{i=0}^{t}3^{i}\binom{x}{i}\sum\limits_{k=0}^{i-1}\frac{1}{x-k}\left(\sum\limits_{j=0}^{i-1}\frac{1}{x-j}-\frac{1}{x-k}\right) \\
\sum\limits_{i,r=0}^{t}3^{i+r}\binom{x}{i}\binom{x}{r}\left(\sum\limits_{k=0}^{i-1}\frac{1}{x-k}\sum\limits_{s=0}^{r-1}\frac{1}{x-s}\right) & > \sum\limits_{i,r=0}^{t}3^{i+r}\binom{x}{i}\binom{x}{r}\left(\sum\limits_{k=0}^{i-1}\frac{1}{x-k}\left(\sum_{s=0}^{i-1}\frac{1}{x-s}-\frac{1}{x-k}\right)\right).
\end{align}

It therefore suffices to show that for each $\left(i,r\right)$ pair that for $x\geq t$ we have
\begin{equation}
\sum\limits_{k=0}^{i-1}\frac{1}{x-k}\sum\limits_{s=0}^{r-1}\frac{1}{x-s} > \sum\limits_{k=0}^{i-1}\frac{1}{x-k}\left(\sum_{s=0}^{i-1}\frac{1}{x-s}-\frac{1}{x-k}\right)
\end{equation}

Again, it suffices to show that for each $\left(k,s\right)$ pair that we have 
\begin{equation}
\frac{1}{\left(x-k\right)\left(x-s\right)}>\frac{1}{\left(x-k\right)\left(x-s\right)}-\frac{1}{\left(x-k\right)^2},
\end{equation}
which is clearly true for $x\geq t$ as $k,s\leq t-1$. Therefore, we have that $h_{t}''\!\left(x\right)>0$ for all $x\geq t$.
\end{proof}

Combining these lemmas, we can describe the behavior of $h_{t}\!\left(n\right)$.

\begin{proposition}\label{hminmax}
The function $h_{t}\!\left(n\right)$ has two local optima, a local maxima at $n=0$ and a local minima between $n=2t-2$ and $n=2t$.
\end{proposition}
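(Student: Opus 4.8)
The plan is to pin the minimum down with the convexity and first-derivative lemmas, and to read the maximum off the exact integer values of \cref{binomiallemma}. For the minimum I would first invoke \cref{secondderivativelemma}: since $h_t''(x)>0$ for $x\geq t$, the function $h_t$ is strictly convex and $h_t'$ is continuous and strictly increasing on $[t,\infty)$ (positivity of $f_t$ there, from \cref{rootboundlemma}, guarantees smoothness). For $t\geq 2$ one has $t\leq 2t-2$, so both $2t-2$ and $2t$ lie in this region, and \cref{firstderivlowerlemma,firstderivupperlemma} give $h_t'(2t-2)<0<h_t'(2t)$. The intermediate value theorem, together with the strict monotonicity of $h_t'$, then produces a unique $x^{\ast}\in(2t-2,2t)$ with $h_t'(x^{\ast})=0$; monotonicity forces $h_t'<0$ on $[t,x^{\ast})$ and $h_t'>0$ on $(x^{\ast},\infty)$, so $x^{\ast}$ is the unique critical point on $[t,\infty)$ and a strict local minimum, with $h_t$ decreasing on $[t,x^{\ast}]$ and increasing afterward. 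The only integer strictly inside $(2t-2,2t)$ is $2t-1$, which is where the integer minimum sits. The case $t=1$ must be treated by hand: $h_1$ is convex on $[1,\infty)$ with $h_1'(1)<0<h_1'(2)$, so the minimum lands in $(1,2)\subset(2t-2,2t)$ (this is already covered by the numerical checks in \cref{firstderivlowerlemma}).

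For the maximum at $n=0$ I would argue directly on the integers. For $x<0$ we have $h_t(x)=x$, strictly increasing up to $h_t(0)=0$, and by \cref{binomiallemma} the values $h_t(n)=-n$ for $0\leq n\leq t$ strictly decrease from $h_t(0)=0$; in particular $h_t(-1)=-1<h_t(0)=0>h_t(1)=-1$, so $n=0$ is a strict local maximum of $h_t$ on the integers. Splicing this with the minimum analysis, the integer sequence increases for $n\leq 0$, decreases on $0\leq n\leq t$ and then (being on the convex decreasing branch $[t,x^{\ast}]$) keeps decreasing to $n=2t-1$, after which it increases. Hence $h_t$ has exactly the two claimed local optima.

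The hard part will be the behaviour near the origin. The continuous extension of $h_t$ is \emph{not} unimodal on $(0,t)$: already for $t=2$ it climbs above $h_t(0)=0$ immediately to the right of $0$ before descending to $h_t(1)=-1$, so there is no continuous local maximum at $0$ and one cannot justify the maximum by sign information about $h_t'$ near $0$. The argument must therefore rest on the exact integer evaluations of \cref{binomiallemma}, and the statement is correctly read as a property of $h_t$ restricted to the integers (which is the quantum Hamming bound). The remaining delicate point is the endpoint $t=1$, for which $2t-2=0$ lies outside the range $x\geq t$ of \cref{secondderivativelemma} and so needs the separate treatment above.
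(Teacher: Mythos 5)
Your proof is correct and follows essentially the same route as the paper's: \cref{binomiallemma} gives the local maximum at $n=0$, while \cref{rootboundlemma,secondderivativelemma} (continuity and convexity on $[t,\infty)$) combined with the sign change $h_t'(2t-2)<0<h_t'(2t)$ from \cref{firstderivlowerlemma,firstderivupperlemma} locate the minimum in $(2t-2,2t)$. You are in fact more careful than the paper on two points it glosses over: the paper's claim that $h_t$ is ``decreasing on $[0,2t-2]$'' is false for the continuous extension (exactly as your $t=2$ example shows, $h_2$ rises just to the right of $0$), and is only valid at the integer points supplied by \cref{binomiallemma}; and for $t=1$ the point $2t-2=0$ lies outside the convexity region $[t,\infty)$ of \cref{secondderivativelemma}, so that case genuinely needs separate treatment, which the paper does not give.

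Two small slips, neither fatal. First, your $t=1$ patch attributes $h_1'(1)<0$ to the numerical checks in \cref{firstderivlowerlemma}, but the check recorded there is $h_1'(0)=-3.328$; the fact $h_1'(1)=1-3/(4\ln 2)<0$ is true but needs its own one-line computation (alternatively, use the paper's $h_1'(0)<0$ together with continuity on $[0,\infty)$ from \cref{rootboundlemma} to conclude $h_1$ has an interior minimum on $(0,2)$ directly). Second, the remark that the integer minimum ``sits at $2t-1$'' does not follow from the continuous minimizer $x^{\ast}$ lying in $(2t-2,2t)$: convexity only pins the discrete minimum to an integer adjacent to $x^{\ast}$, hence to $\left\{2t-2,2t-1,2t\right\}$, and ruling out the endpoints would require comparing $h_t(2t-2)$, $h_t(2t-1)$, and $h_t(2t)$ explicitly. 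Since the proposition, as it is later used in \cref{shiftprop}, only needs the continuous minimum located between $2t-2$ and $2t$, this refinement is unnecessary and its lack of justification does not affect your argument.
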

\begin{proof}
We prove this by showing that $h_{t}\!\left(n\right)$ is increasing on the interval $(-\infty, 0]$, decreasing on $[0, 2t-2]$, and increasing on $[2t, \infty)$. The first of these is trivial to show. By \cref{binomiallemma} it is clear that our function is decreasing on the interval $[0,t]$, so it must have a local maxima at $n=0$.

By \cref{rootboundlemma,secondderivativelemma}, we know that $h_{t}\!\left(x\right)$ is continuous and convex in the region $[t,\infty)$, and since $h'_{t}\!\left(2t-2\right)<0<h'_{t}\!\left(2t\right)$ by \cref{firstderivlowerlemma,firstderivupperlemma}, we know that it must have a single local minima between those two points.



\end{proof}

\begin{lemma}\label{lessonederivative}
We have $h'_{t}\!\left(x\right)<1$ for all $x\geq t$.
\end{lemma}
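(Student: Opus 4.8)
The plan is to reduce the inequality to the positivity of the numerator appearing in \cref{firstderivative}. Recall that $h'_{t}\!\left(x\right)=1-\frac{f'_{t}\!\left(x\right)}{\ln2\cdot f_{t}\!\left(x\right)}$, so the claim $h'_{t}\!\left(x\right)<1$ is equivalent to $\frac{f'_{t}\!\left(x\right)}{\ln2\cdot f_{t}\!\left(x\right)}>0$. Since $\ln2>0$ and, by \cref{rootboundlemma}, $f_{t}\!\left(x\right)>0$ for every $x\geq t-1$ (in particular for $x\geq t$), it suffices to prove that $f'_{t}\!\left(x\right)>0$ on the range $x\geq t$.

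First I would write out the derivative explicitly as $f'_{t}\!\left(x\right)=\sum_{i=0}^{t}3^{i}\binom{x}{i}\sum_{k=0}^{i-1}\frac{1}{x-k}$ and check that each factor is positive when $x\geq t$. For $0\leq i\leq t$, the binomial $\binom{x}{i}$ is the product of the factors $x,x-1,\dots,x-i+1$ divided by $i!$, and the smallest of these factors satisfies $x-i+1\geq t-t+1=1>0$, so $\binom{x}{i}>0$. Likewise, for each $k$ with $0\leq k\leq i-1\leq t-1$ we have $x-k\geq x-\left(t-1\right)\geq 1>0$, so every reciprocal $\frac{1}{x-k}$ is positive.

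Consequently every summand of $f'_{t}\!\left(x\right)$ is nonnegative, and the $i=1$ term alone contributes $3\binom{x}{1}\cdot\frac{1}{x}=3>0$. Hence $f'_{t}\!\left(x\right)>0$ for all $x\geq t$, which combined with $f_{t}\!\left(x\right)>0$ gives $h'_{t}\!\left(x\right)<1$, as desired.

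I do not expect a genuine obstacle here: the entire content lies in the sign bookkeeping of \cref{firstderivative}, and the hypothesis $x\geq t$ is precisely what keeps all the binomial factors and reciprocals strictly positive. The only point requiring a moment's care is confirming that the domain restriction prevents any factor $x-k$ from vanishing or turning negative, which is why the bound is stated for $x\geq t$ rather than on the full domain of continuity $x\geq t-1$ (at $x=t-1$ the term with $i=t$, $k=t-1$ would formally involve $\frac{1}{x-k}=\frac{1}{0}$).
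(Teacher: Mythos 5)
Your proof is correct and follows essentially the same route as the paper: both reduce the claim via \cref{firstderivative} to the positivity of $f'_{t}\!\left(x\right)/\left(\ln2\cdot f_{t}\!\left(x\right)\right)$ on $x\geq t$, with the denominator handled by \cref{rootboundlemma}. Your write-up is in fact slightly more careful than the paper's, which cites \cref{rootboundlemma} for the positivity of the numerator as well, whereas you verify $f'_{t}\!\left(x\right)>0$ term by term directly.
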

\begin{proof}
From \cref{firstderivative}, we know that this is equivalent to 
\begin{equation}
\frac{f'_{t}\!\left(x\right)}{\ln2\cdot f_{t}\!\left(x\right)}>0,
\end{equation}
for all $x\geq t$. By \cref{rootboundlemma}, we know that both the numerator and denominator must be positive in this range, so the result follows.
\end{proof}

\begin{proposition}\label{shiftprop}
Let $g_{t}\!\left(x\right)$ be the function representing the $\left(a,t\right)$-bound, that is
\begin{equation}
g_{t}\!\left(x\right)=h_{t}\!\left(x-2ta\right)+\left(2t-1\right)a.
\end{equation}
If $g_{t}\!\left(2ta\right)<h_{t}\!\left(2ta\right)$, then $g_{t}\!\left(n\right)<h_{t}\!\left(n\right)$ for all $n\in\mathbb{N}$.
\end{proposition}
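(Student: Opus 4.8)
The plan is to reduce everything to the polynomial $f_t$ and then analyse the difference $D(x)=h_t(x)-g_t(x)$ as a real function. First I would record the explicit forms: for every nonnegative integer $n$ one has $h_t(n)=n-\log_2 f_t(n)$, while from the piecewise definition $g_t(n)=n-a$ for $n\le 2ta$ (since the shifted argument $n-2ta$ is then nonpositive, and $h_t$ agrees with the identity there, with $h_t(0)=0$), and $g_t(n)=n-a-\log_2 f_t(n-2ta)$ for $n\ge 2ta$. With these, the hypothesis $g_t(2ta)<h_t(2ta)$ is nothing but the statement $\log_2 f_t(2ta)<a$, which is the only place the assumption enters. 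I would also note we may assume $a\ge 1$, since $a=0$ gives $g_t=h_t$ and makes the hypothesis false.

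For the left region $n\le 2ta$ the argument is immediate: $h_t(n)-g_t(n)=a-\log_2 f_t(n)$, and since $f_t(n)=\sum_{i=0}^{t}3^{i}\binom{n}{i}$ is nondecreasing in $n$ on the integers (each $\binom{n}{i}$ is), we get $\log_2 f_t(n)\le \log_2 f_t(2ta)<a$, hence $D(n)>0$.

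For the right region I would pass to the real variable and show that $D(x)=h_t(x)-h_t(x-2ta)-(2t-1)a$ is nondecreasing on $[2ta,\infty)$; combined with $D(2ta)>0$ (the hypothesis) this yields $D(n)\ge D(2ta)>0$ for all integers $n\ge 2ta$. Since $D$ is continuous and piecewise smooth there with $D'(x)=h'_t(x)-h'_t(x-2ta)$, I would split on the shifted argument $y=x-2ta\ge 0$. When $y\ge t$, both $x$ and $y$ lie in the convex region, so by \cref{secondderivativelemma} the derivative $h'_t$ is increasing on $[t,\infty)$ and $h'_t(x)\ge h'_t(y)$, giving $D'(x)\ge 0$. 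When $0\le y<t$, I would instead use signs: $x\ge 2ta\ge 2t$ places $x$ in the increasing branch of \cref{hminmax}, so $h'_t(x)\ge 0$, while $y<t$ lies in the decreasing branch, so $h'_t(y)\le 0$, and again $D'(x)\ge 0$.

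The main obstacle is the transition window $0\le y<t$, where the shifted copy of $h_t$ sits in the part not covered by the convexity lemma; there I must lean on the monotonicity structure of \cref{hminmax} (that $h_t$ decreases on $[0,2t-2]$ and increases on $[2t,\infty)$). For $t\ge 2$ this is clean because $[0,t)\subseteq[0,2t-2]$, but the case $t=1$ (where $2t-2=0$) needs the explicit derivative $h'_1(x)=1-3/(\ln 2\,(1+3x))$ to confirm $h'_1\le 0$ on $[0,1)$, exactly the kind of small-$t$ check already used in \cref{firstderivlowerlemma}.
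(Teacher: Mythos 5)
Your handling of the region $n\leq 2ta$ is correct (and arguably cleaner than the paper's slope argument: writing $D(n)=a-\log_{2}f_{t}(n)$ and using monotonicity of $f_{t}$ at integer points is direct), and your convexity step is sound once both arguments of $h_{t}$ lie in $[t,\infty)$. The genuine gap is the transition window $0\leq y<t$, and it is exactly the case you dismiss as ``clean.'' Your claim that $h'_{t}(y)\leq 0$ there is not established by \cref{hminmax} and is in fact false: \cref{hminmax} asserts that $h_{t}$ decreases on $[0,2t-2]$ only in the sense of its values at \emph{integer} points (its proof rests on \cref{binomiallemma}, which gives $h_{t}(n)=-n$ for integers $0\leq n\leq t$), whereas the real function interpolates the points $(n,-n)$ non-monotonically. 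Concretely, for $t=2$ one has $f_{2}(x)=1+3x+\tfrac{9}{2}x(x-1)$, so $f'_{2}(0)=3-\tfrac{9}{2}<0$ and hence $h'_{2}(0)=1-f'_{2}(0)/\left(\ln 2\cdot f_{2}(0)\right)=1+\tfrac{3}{2\ln 2}\approx 3.16>0$. Since $h'_{t}(x)<1$ for all $x\geq t$ by \cref{lessonederivative}, this forces $D'(x)=h'_{t}(x)-h'_{t}(x-2ta)<0$ for $x$ just above $2ta$: the difference $D$ is strictly \emph{decreasing} there, so your plan to show $D$ nondecreasing on all of $[2ta,\infty)$ cannot work. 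This failure occurs for every even $t$ (one can check $f'_{t}(0)=\sum_{i=1}^{t}(-1)^{i-1}3^{i}/i<0$ by pairing consecutive terms), and even for odd $t\geq 3$ the sign claim fails at non-integer points (e.g.\ $h'_{3}(0.5)\approx 0.12>0$). A secondary problem is that for real $y\in[0,t-1)$ the paper's lemmas do not even guarantee $h_{t}$ is defined and differentiable, since \cref{rootboundlemma} only controls the roots of $f_{t}$ for $x\geq t-1$.

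The window has to be handled at integer points, which is what the paper does on the interval from $2ta$ to the local minimum $x_{0}$ of $g_{t}$: for $n=2ta+j$ with $0\leq j\leq t$, \cref{binomiallemma} gives $g_{t}(n)=(2t-1)a-j$, which is decreasing in $j$, while $h_{t}(n)\geq h_{t}(2ta)>g_{t}(2ta)$ because $h_{t}$ is increasing on $[2t,\infty)$ (convexity together with $h'_{t}(2t)>0$, by \cref{secondderivativelemma,firstderivupperlemma}) and $2ta\geq 2t$ once $a\geq 1$. This yields $D(n)>0$ for all integers $n\in[2ta,2ta+t]$, and in particular $D(2ta+t)>0$; from that point on, your convexity argument applies verbatim on $[2ta+t,\infty)$ and closes the proof. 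So your overall architecture survives, but the derivative-sign argument in the window must be replaced by this integer-point comparison; as written, the step is wrong rather than merely unjustified.
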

\begin{proof}

We note that $g'_{t}\!\left(x\right)=1$ on the interval $[0,2ta]$. On the other hand, on the interval $[t,2ta]$ $h_{t}\!\left(x\right)$ is continuous by \cref{rootboundlemma} and $h'_{t}\!\left(x\right)<1$ by \cref{lessonederivative}, and on the interval $[0,t]$ the effective slope of $h_{t}\!\left(x\right)$ at integer points is $-1$ by \cref{binomiallemma}. Therefore, by our assumption that $g_{t}\!\left(2ta\right)<h_{t}\!\left(2ta\right)$, we have that $g_{t}\!\left(n\right)<h_{t}\!\left(n\right)$ for all $n\leq2ta$.

Let $x_{0}$ be the local minima of $g_{t}\!\left(x\right)$ described in \cref{hminmax}. Since $g_{t}\!\left(x\right)$ is decreasing at integer points on the interval $[2ta,x_{0}]$, and $h_{t}\!\left(x\right)$ is increasing on the same interval, it follows from our initial assumption that $g_{t}\!\left(n\right)<h_{t}\!\left(n\right)$ for all $n\in[2ta,x_{0}]$.

Note that $g'_{t}\!\left(x\right)=h'_{t}\!\left(x-2ta\right)$, so on the interval $[x_{0},\infty)$ we have $g'_{t}\!\left(x\right)<h'_{t}\!\left(x\right)$ since $h'\!\left(x\right)$ is increasing by \cref{secondderivativelemma}. Consider the function $h_{t}\!\left(x\right)-g_{t}\!\left(x\right)$ and its derivative $h'_{t}\!\left(x\right)-g'_{t}\!\left(x\right)$, which must be positive on the interval in question. Therefore, we have that $h_{t}\!\left(x\right)-g_{t}\!\left(x\right)>0$, or equivalently that $g_{t}\!\left(n\right)<h_{t}\!\left(n\right)$ for all $n\geq x_{0}$.

\end{proof}

\end{document}